\documentclass[11pt,oneside,a4paper]{article}
\usepackage[top=1 in, bottom=1 in, left=0.85 in, right=0.85 in]{geometry}
\usepackage[hidelinks]{hyperref}

\usepackage{multirow}
\usepackage[table,xcdraw]{xcolor}

\usepackage{caption}
\usepackage{subcaption}
\usepackage{algorithm, algorithmic}
\usepackage{amsmath,amssymb,graphicx,url}
\usepackage{thmtools,thm-restate,wrapfig,enumitem,mathabx}
\newtheorem{assumption}{Assumption}
\newtheorem{theorem}{Theorem}
\newtheorem{remark}{Remark}
\newtheorem{lemma}{Lemma}

\newtheorem{proof}{Proof}

\title{Scalable and Sample Efficient Distributed Policy Gradient Algorithms in Multi-Agent Networked Systems}

\author{Xin Liu \\ShanghaiTech University \\ liuxin7@shanghaitech.edu.cn 
\and
Honghao Wei \\ University of Michigan, Ann Arbor \\ honghaow@umich.edu
\and 
Lei Ying \\University of Michigan, Ann Arbor \\ leiying@umich.edu}

\date{}

\begin{document}

\maketitle

\begin{abstract}
This paper studies a class of multi-agent reinforcement learning (MARL) problems where the reward that an agent receives depends on the states of other agents, but the next state only depends on the agent's own current state and action. We name it REC-MARL standing for REward-Coupled Multi-Agent Reinforcement Learning. REC-MARL has a range of important applications such as real-time access control and distributed power control in wireless networks. 
This paper presents a distributed policy gradient algorithm for REC-MARL. The proposed algorithm is {\em distributed} in two aspects: (i) the learned policy is a distributed policy that maps a local state of an agent to its local action and (ii) the learning/training is distributed, during which each agent updates its policy based on its own and neighbors' information. The learned algorithm achieves {\em a stationary policy} and its iterative complexity bounds depend on the dimension of local states and actions. 
The experimental results of our algorithm for the real-time access control and power control in wireless networks show that our policy significantly outperforms the state-of-the-art algorithms and well-known benchmarks.
\end{abstract}

\section{Introduction}
Multi-Agent Reinforcement Learning, or MARL for short, considers a reinforcement learning problem where multiple agents interact with each other and with the environment to finish a common task or achieve a common objective. The key difference between MARL and single-agent RL is that each agent in MARL only observes a subset of the state and receives an individual reward and does not have global information. Examples include multi-access networks where each user senses the collision locally, but needs to coordinate with each other to maximize the network throughput, or power control in wireless networks where each wireless node can only measure its local interference but they need to collectively decide the transmit power levels to maximize a network-wide utility, or task offloading in edge computing where each user observes its local quality of service (QoS) but they coordinate tasks offloaded to edge servers to maintain a high QoS for the network, 
or a team of robots where each robot can only sense its own surrounding environment, but the robot team needs to cooperate and coordinate for accomplishing a rescue task.  

MARL raises two fundamental aspects that are different from single-agent RL. The first aspect is the policy space (the set of policies that are feasible) is restricted to local policies since each agent has to decide an action based on information available to the agent.
The second aspect of MARL is that while distributed learning, e.g. distributed policy gradient, is desired, it is impossible in general MARL because an agent does not know the global state and rewards. One popular approach to address the second issue is the  centralized learning and distributed execution paradigm \cite{FoeFarAfo_18}, where data samples are collected by a central entity and the central entity uses data from all agents to learn a local policy.  Therefore, the learning occurs at the central entity, but the learned policy is a local policy.

In this paper, we consider a special class of MARL, which we call REward-Coupled Multi-Agent Reinforcement Learning (REC-MARL). In REC-MARL, the problem is coupled through the reward functions.  More specifically, the reward function of agent $n$ depends on agent $n$'s state and action and its neighbors' states and actions. However, the next state of agent $n$ only depends on agent $n'$s current state and action, and is independent of other agents' states and actions. In contrast, in general MARL, the transition kernels of the agents are coupled, so the next state of an agent depends on other agents' states and actions. 

While REC-MARL is more restrictive than MARL, a number of applications of MARL are actually REC-MARL. In wireless networks, where agents are network nodes or devices, and the state of an agent is could be its queue length or transmit power, the state only depends on an agent's current state and its current action (see Section \ref{sec: model} for detailed description). 
For REC-MARL, this paper proposes distributed policy gradient algorithms and establish the local convergence (i.e. the algorithms achieve a stationary policy). The main contributions are summarized as follows.
\begin{itemize}[leftmargin=*]
    \item We establish {\em perfect} decomposition of value functions and policy gradients in Lemmas \ref{lem:dec q} and \ref{lem:dec pg}, respectively, where we show that the global value functions (policy-gradients) can be written as a summation of local value functions (policy gradients). This decomposition significantly reduces the complexity of value functions and motivates our distributed multi-agent policy gradient algorithms.   
    
    \item We propose a Temporal-Difference (TD) learning based regularized distributed multi-agent policy gradient algorithm, named TD-RDAC in Algorithm \ref{alg: TD PG}. We proved in Theorem \ref{alg: TD PG} that TD-RDAC achieves the local convergence with the rate $\tilde {O}\left(NS_{\max} A_{\max}/\sqrt{T}\right),$ which only depends on the maximum sizes of local state and action spaces, instead of the sizes of the global action and state spaces. 
    
    \item We apply TD-RDAC to the applications of real-time access and power control in ad hoc wireless networks, both are well-known challenging networking resource management problems. 
    Our experiments show that TD-RDAC significantly outperforms the state-of-the-art algorithm \cite{QuWieLi_22} and well-known benchmarks \cite{TanTamSri_01,Rob_75} in these two applications. 
\end{itemize} 

\subsection*{Related Work}
{\it MARL for networking:} MARL have been applied to various networking applications (e.g.,  content caching \cite{WanWanLiu_20}, packet routing \cite{SunKirRen_21}, video transcoding and transmission \cite{CheXuWan_21}, transportation network control \cite{ChuChiKat_20}). The work \cite{WanWanLiu_20} proposed a multi-agent advantage actor-critic algorithm for  large-scale video caching at network edge and it reduced the content access latency and traffic cost. The proposed algorithm requires the knowledge of neighbors' policies, which is usually not observable. \cite{SunKirRen_21} studied packet routing problem in WAN and applied MARL to minimize the average packet completion time. The method adopted dynamic consensus algorithm to estimate the global reward, which incurs a heavy communication overhead.
The work \cite{CheXuWan_21} proposed a multi-agent actor-critic algorithm for  crowd-sourced livecast services and improved the average throughput and the transmission delay performance. The proposed algorithm requires a centralized controller to maintain the global state information. The most related work is \cite{ChuChiKat_20}, which utilized spatial discount factor to decouple the correlation of distant agents and designed networked MARL solution in traffic signal control and cooperative cruise control. However, it requires a dedicated communication module to maintain a local estimation of global state, which again incurs large communication cost. Moreover, the theoretical performance of these algorithms in \cite{WanWanLiu_20,SunKirRen_21,CheXuWan_21,ChuChiKat_20} are not investigated. 

{\it Provable MARL:} There have been a great amount of works addressing the issues of scalability and sample complexity in MARL.  
A popular paradigm is the centralized learning and distributed execution (see e.g. \cite{FoeFarAfo_18}), where agents share a centralized critic but have individual actors. Both the critic and the actors are trained at a central server, but the actors are local policies that lead to a distributed implementation. \cite{ZhaYanLiu_18} proposed a decentralized actor-critic algorithm for a cooperative scenario, where each agent has its individual critic and actor with a shared global state. The proposed algorithm converges an stationary point of the objective function. Motivated by  \cite{ZhaYanLiu_18}, \cite{CheZhoChe_22} and \cite{HaiLiuLu_22} provided a finite-time analysis of decentralized actor-critic algorithms in the infinity-horizon discounted-reward MDPs and average-reward MDPs, respectively. \cite{ZhaRenLi_21} studied a policy gradient algorithm for multi-agent stochastic games, where each agent maximizes its reward by taking actions independently based on the global state. It established its local convergence for general stochastic games and its global convergence for Markov potential games. The centralized critic or shared states (even with decentralized actors) require collecting global information and centralized training. The work \cite{QuWieLi_20} exploits the network structure to develop a localized or scalable actor-critic (SAC) framework that is proved to achieve $O(\gamma^{k+1})$-approximation of a stationary point in $\gamma$-discounted-reward MDPs. \cite{QuWieLi_22} and \cite{LinQuHua_21} extended SAC framework in \cite{QuWieLi_20} into the settings of infinity-horizon average-reward MDPs and stochastic/time-varying communication graphs. The SAC framework in \cite{QuWieLi_20} is efficient in implementing the paradigm of distributed learning and distributed execution. It is also worth mentioning that in a recent work \cite{ZhaQuXu_22}, the authors studied a kernel-coupled setting and established that the localized policy-gradient converges to a neighborhood of global optimal policy, where the distance to the global optimality depends on the number of hops, $k$, polynomially and could be a constant when $k$ is small.
Another line of research in MARL that addresses the scalability issue is the mean-field approximation (MFA)  \cite{GuoHuXu_19,EliJulLau_20,XieYanWan_21}, where agents are homogeneous and an individual agent interacts or games with the approximated ``mean'' behavior. However, the MFA approach is only applicable to a homogeneous system. Finally, \cite{MeuHauKim_98} and \cite{WeiYuNee_18} studied weakly-coupled MDPs, where individual MDPs are independent and coupled through constraints, instead of coupled reward as in ours. 

Different from these existing works, this paper considers reward-coupled multi-agent
Reinforcement learning (REC-MARL), establishes the local convergence of policy gradient algorithms with both distributed learning and distributed implementation, and demonstrates its efficiency in classical and challenging resource management problems. 

\section{Model}\label{sec: model}
We consider a  multi-agent system where the agents are connected by an interactive graph $\mathcal G = (\mathcal I, \mathcal E)$ with $\mathcal I$ and $\mathcal E$ being the set of nodes and edges, respectively. The system consists of $N = |\mathcal I|$ agents who are connected with edges in $\mathcal E.$ Each agent $n \in\mathcal I$ is associated with a $\gamma$-discounted Markov decision process of $(\mathcal S_n, \mathcal A_n, r_n, \mathcal P_n, \gamma),$ where $\mathcal S_n$ is the state space, $\mathcal A_n$ is the action space, $r_n$ is the reward function, and $\mathcal P_n$ is the transition kernel. Define the neighbourhood of agent $n$ (including itself) to be $\mathcal N(n).$ Define the states and actions of the neighbors of agent $n$ to be $s_{\mathcal N(n)}$ and $a_{\mathcal N(n)},$ respectively. We next formally define the REward-Couple Multi-Agent Reinforcement Learning. 

{\bf REC-MARL:} 
The reward of agent $n$ depends on its neighbors' states and actions $r_n(s_{\mathcal N(n)}, a_{\mathcal N(n)}).$ The transition kernel of agent $n,$ $\mathcal P_n(\cdot |s_n,a_n),$ only depends on its own state $s_n$ and action $a_n.$ Agent $n$ uses a local policy $\pi_{n}: \mathcal S_n \to \mathcal A_n,$ where $\pi_{n}(a_n|s_n)$ is the probability of taking action $a_n$ in state $s_n.$ 

Given a REC-MARL problem, the global state space is  $\mathcal S = \mathcal S_1 \times \mathcal S_2 \times \cdots \times \mathcal S_n$ with $S_{\max} = \max_n |\mathcal S_n|;$ the global action space is $\mathcal A = \mathcal A_1 \times \mathcal A_2 \times \cdots \times \mathcal A_n$ with $A_{\max} = \max_n |\mathcal A_n|;$ the global reward function is $r(s,a) = \frac{1}{N}\sum_{n=1}^N r_n(s_{\mathcal N(n)}, a_{\mathcal N(n)});$ the transition kernel is $\mathcal P(s'|s,a), \forall s,s'\in \mathcal S, \forall a\in \mathcal A$ with $\mathcal P(s'|s,a) = \prod_{n=1}^N \mathcal P_n(s'_n|s_n,a_n).$ 

In this paper, we study the softmax policy parameterized by $\theta_n: \mathcal S_n \times \mathcal A_n \to \mathbb R$ that is $$\pi_{\theta_n}(a_n|s_n) = \frac{e^{\theta_n(a_n, s_n)}}{\sum_{a'_n\in \mathcal{A}_n }{e^{\theta_n(a'_n, s_n)}}}, \forall n.$$ 
The global policy $\pi_{\theta}: \mathcal S \times \mathcal A  \to \mathbb R$ with $\theta = [\theta_1,\theta_2,\cdots, \theta_N]$ is as follows $$\pi_\theta(a|s) = \prod_{n=1}^{N} \pi_{\theta_n}(a_n|s_n).$$
In this paper, we study $\gamma$-discounted infinite-horizon Markov decision processes (MDP) with $(s(t), a(t))$ being the state and action of the MDP at time $t.$ For a global policy $\pi_\theta,$  its value function, action value function ($Q$-function), and advantage function given the initial state and action ($s(0), a(0)$) are defined below: 
\begin{align}
    V^{\pi_\theta}(s) &:= \mathbb E\left[\sum_{t=0}^{\infty}\gamma^t r(s(t),a(t)) \Big| s(0) = s\right], \nonumber\\
    Q^{\pi_\theta}(s,a) &:= \mathbb E\left[\sum_{t=0}^{\infty}\gamma^t r(s(t),a(t)) \Big| s(0) = s, a(0) = a\right], \nonumber \\
    A^{\pi_\theta}(s,a) &:= Q^{\pi_\theta}(s,a) - V^{\pi_\theta}(s). \nonumber
\end{align}
Let $\rho$ be the initial state distribution. Define $V^{\pi_\theta}(\rho) = \mathbb E_{s\sim \rho} [V^{\pi_\theta}(s)].$ Further define the discounted occupancy measure to be
\begin{align}
    d^{\pi_\theta}_\rho(s) = (1-\gamma) \sum_{t=0}^\infty \gamma^t \mathcal P(s(t) = s| \rho, \pi_\theta). \label{eq:discounted ocupy prob}
\end{align}
We seek policies with distributed learning and distributed execution in this paper. We will decompose the global $Q$-function as a sum of local $Q$-functions so that the agents can collectively optimize the local policy.  
Before presenting the details of our solution, we introduce two representative applications in wireless networks: real-time access control and distributed power control. 

{\bf Real-time access control in wireless networks:} 
Consider a wireless access network with $N$ nodes (agents) and $M$ access points as shown in Figure \ref{fig: access}. At the beginning of each time slot, a packet arrives at node $n$ with probability $w_n$. The packet is associated with deadline $d.$ $q_m$ is the successful transmitting probability when transmitting to access point $m.$ A packet is removed if either it is sent to an access point (not necessarily delivered successfully) or it expires. At each time, each node decides whether to transmit a packet to one of the access points or keeping silence. A collision happens if multiple nodes send packets to the same access point simultaneously. Therefore, the throughput of a node depends not only on its decision but also its neighboring nodes' decisions. In particular, the throughput of node $n$ is 
\begin{align}
r_n(a_{\mathcal N(n)})= \sum_{m \in \mathcal {AP}(n)} a_{n, m} \prod_{k\neq n, k\in \mathcal N(n)} (1-a_{k,m}) q_m, \label{eq: rta obj}
\end{align} where $a_{n, m} \in \{0, 1\}$ indicate if  node $n$ transmits a packet to its access point $m \in \mathcal {AC}(n).$ The goal of real-time access control is to maximize the network throughput. 

The problem of real-time access control is  challenging because i) the throughput functions are non-convex and highly-coupled functions of other nodes' decisions and ii) the system parameters are unknown (e.g., $w_n$ and $p_m$). This problem can be formulated as a REC-MARL problem. In particular for node/agent $n,$ the MDP formulation is $(Q_n, a_n, r_n(a_{\mathcal N(n)}), Q'_n):$ the state of agent $n$ is the queue length $Q_n = \{Q_n^l\}_l$ with $Q_n^l$ being the number of packets with the remaining time $l$ ; the action is $a_{n,m}$ with $m\in \mathcal AP(n)$ (one of access point of 
agent $n$); $r_n(a_{\mathcal N(n)})$ is the reward for agent $n;$ $Q'_n$ is its next queue state.

{\bf Distributed power control in wireless network:} The other application of REC-MARL is distributed power control in wireless networks. Consider a network with $N$ wireless links (agents) as shown in Figure \ref{fig: power}. Each link can control the transmission rate by adapting its power level, and neighboring links interfere with each other. Therefore, the transmission rate of a link depends not only on its transmit power but also its neighboring links' transmit power. In particular, 
the reward/utility of link $n$ is its  transmission rate minus a linear power cost: 
\begin{align}
\log\left(1 + \frac{p_nG_{n,n}}{\sum_{m \neq n, m\in \mathcal N(n)} p_m G_{m,n}+\sigma_n}\right) - u_n p_n, \label{eq: pc obj}
\end{align} where $G_{m,n}$ is the channel gain from node $m$ to $n,$ $p_n$ is the power of node $n,$ $\sigma_n$ is the power of noise at user $n,$ and $u_n$ are the trade-off parameters. The goal of distributed power control to maximize the total rewards for the  network.  

The distributed power control is also challenging because the reward function is non-convex and highly-coupled, and the channel condition is unknown and dynamic. 
This problem can also be formulated as a REC-MARL problem. In particular for link/agent $n,$ the MDP formulation is $(p_n, a_n, r_n(p_{\mathcal N(n)}), p'_n):$ the state of agent $n$ is the power level $p_n \in \mathbb Z^{+}$ with $p_n \leq p_{\max}$ ($p_{\max}$ is the maximum power constraint); the action is $a_n\in \{0,-1,+1\}$ (keep, decrease, or increase the power level by one); $r_n(p_{\mathcal N(n)})$ is the reward for agent $n;$ $p'_n$ is the next power level.  

In the following, we present distributed multi-agent reinforcement learning algorithms that can be used to solve these two problems and show that our algorithm outperforms the existing benchmarks. 

\begin{figure}
\centering
\begin{subfigure}{0.35\textwidth}
    \includegraphics[width=\textwidth]{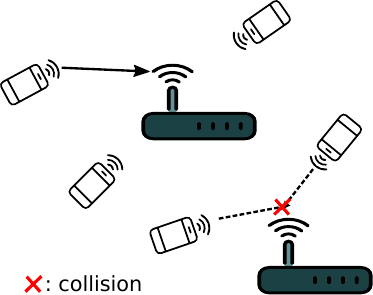}
    \caption{Real-time access control}
    \label{fig: access}
\end{subfigure}
\hspace{50pt}
\begin{subfigure}{0.35\textwidth}
    \includegraphics[width=\textwidth]{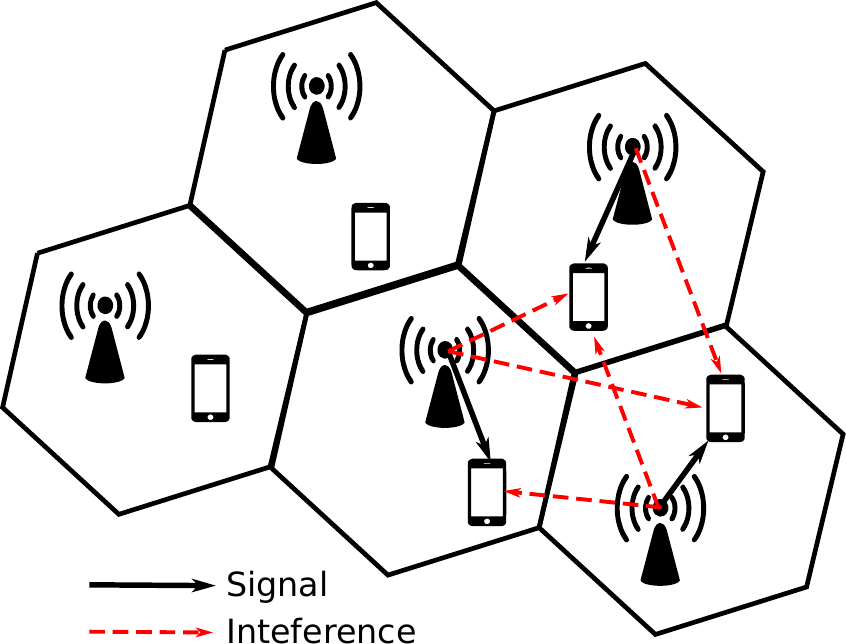}
    \caption{Distributed power control}
    \label{fig: power}
\end{subfigure} \label{fig:traditional}
\caption{Applications of REC-MARL in Wireless Networks}  
\end{figure}

\section{Decomposition of Value and Policy Gradient Functions}
To implement a paradigm of distributed learning and distributed execution, we first establish the decomposition of value and policy gradient functions in Lemma \ref{lem:dec q} and Lemma \ref{lem:dec pg}, respectively. The value and policy gradient functions could be represented locally and computed/estimated via exchange information with their neighbors.  
\subsection{Decomposition of value functions} We decompose the global value ($Q$) functions into the individual value ($Q$) functions and show they only depend on its neighborhood in Lemma \ref{lem:dec q}. The proof could be found in Appendix \ref{app:dec q}.   
\begin{lemma}
Given a multi-agent network, where each agent $n$ is associated with a $\gamma$-discounted Markov decision process defined by $(\mathcal S_n, \mathcal A_n, r_n, \mathcal P_n, \gamma),$ the neighborhood of agent $n$ is defined by $\mathcal{N}(n),$ the network reward function is $r(s,a) = \frac{1}{N}\sum_{n=1}^N r_n(s_{\mathcal N(n)}, a_{\mathcal N(n)})$ and the transition kernel is $\mathcal P(s'|s,a) = \prod_{n=1}^N \mathcal P_n(s'_n|s_n,a_n).$ The policy of a network is $\pi_\theta: \mathcal S \to \mathcal A$ with each agent $n$ using a local policy $\pi_{\theta_n}: \mathcal S_n \to \mathcal A_n,$ we have
\begin{align*}
    V^{\pi_\theta}(s)
    = \frac{1}{N} \sum_{n=1}^N V^{\pi_\theta}_n(s_{\mathcal N(n)}),~~~
    Q^{\pi_\theta}(s,a)
    = \frac{1}{N} \sum_{n=1}^N Q^{\pi_\theta}_n(s_{\mathcal N(n)}, a_{\mathcal N(n)}).
\end{align*}\label{lem:dec q}
\end{lemma}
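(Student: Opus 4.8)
The plan is to exploit two facts: the network reward decomposes additively across agents, and --- more subtly --- each agent's state--action trajectory evolves as an autonomous Markov chain decoupled from the others. First I would apply linearity of expectation to the definition of $V^{\pi_\theta}$. Since $r(s,a) = \frac1N\sum_{n=1}^N r_n(s_{\mathcal N(n)}, a_{\mathcal N(n)})$, I can interchange the finite sum with the expectation and the discounted sum to write
$$V^{\pi_\theta}(s) = \frac1N \sum_{n=1}^N \mathbb E\Bigl[\sum_{t=0}^\infty \gamma^t r_n\bigl(s_{\mathcal N(n)}(t), a_{\mathcal N(n)}(t)\bigr) \Bigm| s(0)=s\Bigr],$$
and define the $n$th summand to be $V_n^{\pi_\theta}(s)$. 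This immediately yields the additive form; what remains is to show that $V_n^{\pi_\theta}(s)$ depends on $s$ only through the neighborhood coordinates $s_{\mathcal N(n)}$, which justifies writing it as $V_n^{\pi_\theta}(s_{\mathcal N(n)})$.

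The core of the argument is a decoupling claim: conditioned on $s(0)=s$, the trajectories $\{(s_k(t),a_k(t))_{t\ge0}\}_{k=1}^N$ are mutually independent, and the law of agent $k$'s trajectory depends only on its own initial state $s_k$. I would prove this by induction on $t$. At $t=0$ the actions satisfy $a(0)\sim \pi_\theta(\cdot\mid s)=\prod_k \pi_{\theta_k}(\cdot\mid s_k)$, so they are independent across $k$, each depending only on $s_k$. For the inductive step, the factorized kernel $\mathcal P(s'\mid s,a)=\prod_k \mathcal P_k(s'_k\mid s_k,a_k)$ makes $s_k(t+1)$ depend only on $(s_k(t),a_k(t))$, and the local policy makes $a_k(t+1)$ depend only on $s_k(t+1)$; hence both the independence and the own-initial-state dependence propagate. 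Consequently, the joint law of the neighborhood sub-trajectory $(s_{\mathcal N(n)}(t),a_{\mathcal N(n)}(t))_{t\ge0}$ is the product over $k\in\mathcal N(n)$ of the individual laws, and therefore depends on the global initial state only through $s_{\mathcal N(n)}$.

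Since $r_n$ is a function solely of $(s_{\mathcal N(n)}, a_{\mathcal N(n)})$, the expectation defining $V_n^{\pi_\theta}(s)$ is a functional of this neighborhood sub-trajectory alone; by the decoupling claim it depends on $s$ only through $s_{\mathcal N(n)}$, which establishes the value-function decomposition. For the $Q$-function I would repeat the same steps after additionally conditioning on $a(0)=a$. Because $a$ enters only through its neighborhood coordinates in the time-zero reward $r_n$ and through the influence of $a_{\mathcal N(n)}$ on the subsequent neighborhood trajectory, the resulting $Q_n^{\pi_\theta}$ depends on $(s,a)$ only through $(s_{\mathcal N(n)}, a_{\mathcal N(n)})$, giving the stated identity.

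I expect the main obstacle to be the careful formulation and inductive proof of the decoupling claim --- in particular, arguing that marginalizing the global chain down to a subset of agents yields a process determined entirely by that subset's initial states. The additive decomposition by linearity is routine; the substance is showing that rewards localized to $\mathcal N(n)$ \emph{see} only the neighborhood's own dynamics, which rests entirely on the product structure of both the transition kernel and the policy.
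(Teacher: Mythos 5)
Your proposal is correct and follows essentially the same route as the paper: linearity of expectation to split the discounted reward sum, followed by the observation that the product structure of the policy and transition kernel makes each neighborhood sub-trajectory's law depend only on $(s_{\mathcal N(n)}, a_{\mathcal N(n)})$. The only cosmetic difference is that the paper establishes this decoupling by writing the trajectory probability directly as a product over agents in $\mathcal N(n)$, whereas you prove the same factorization by induction on $t$ --- a slightly more explicit rendering of the identical idea.
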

\begin{remark}
We have decomposed the global value functions exactly into the local value function of individual agent. The decomposition is ``perfect", which is different from the exponential decay property of Lemma $3$ in \cite{QuWieLi_20} or the polynomial decay property of Proposition $2$ in \cite{ZhaQuXu_22}.
\end{remark}

\subsection{Decomposition of Policy Gradient}
We show that policy gradient function can also be decomposed exactly as $Q$-function.
Recall the definitions of $V^{\pi_\theta}(\rho)$ and $d^{\pi_\theta}_\rho,$ and we first state the classical policy gradient theorem in \cite{RicDavSat_99}.
\begin{theorem}[\cite{RicDavSat_99}]
Let $\pi_\theta$ be a policy parameterized by $\theta$ for a $\gamma$-discounted Markov decision process, we have the policy gradient to be
\begin{align*}
\nabla_{\theta} V^{\pi_\theta}(\rho) =& \frac{1}{1-\gamma} \mathbb E_{s\sim d^{\pi_\theta}_\rho, a \sim \pi_\theta(\cdot |s)} \left[ Q^{\pi_\theta}(s,a) \nabla_{\theta} \log \pi_\theta(a|s)\right]. 
\end{align*}\label{thm:pg}
\end{theorem}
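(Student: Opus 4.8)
The plan is to differentiate the one-step Bellman decomposition of the value function and unroll the resulting recursion. First I would write $V^{\pi_\theta}(s) = \sum_{a} \pi_\theta(a|s)\, Q^{\pi_\theta}(s,a)$ and apply the product rule to obtain $\nabla_\theta V^{\pi_\theta}(s) = \sum_a \left[ (\nabla_\theta \pi_\theta(a|s))\, Q^{\pi_\theta}(s,a) + \pi_\theta(a|s)\, \nabla_\theta Q^{\pi_\theta}(s,a) \right]$. Next I would substitute the Bellman identity $Q^{\pi_\theta}(s,a) = r(s,a) + \gamma \sum_{s'} \mathcal P(s'|s,a)\, V^{\pi_\theta}(s')$; since neither $r$ nor $\mathcal P$ depends on $\theta$, the gradient of $Q$ collapses to $\nabla_\theta Q^{\pi_\theta}(s,a) = \gamma \sum_{s'} \mathcal P(s'|s,a)\, \nabla_\theta V^{\pi_\theta}(s')$. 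This produces a self-referential identity for $\nabla_\theta V^{\pi_\theta}$ in which the ``immediate'' term is $\phi(s) := \sum_a (\nabla_\theta \pi_\theta(a|s))\, Q^{\pi_\theta}(s,a)$ and the recursive term transports the gradient one step forward under the induced state-transition kernel $P^{\pi_\theta}(s'|s) = \sum_a \pi_\theta(a|s)\, \mathcal P(s'|s,a)$.

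The core step is then to unroll this recursion. Iterating $\nabla_\theta V^{\pi_\theta}(s) = \phi(s) + \gamma \sum_{s'} P^{\pi_\theta}(s'|s)\, \nabla_\theta V^{\pi_\theta}(s')$ and collecting terms by time index yields $\nabla_\theta V^{\pi_\theta}(s) = \sum_{t=0}^\infty \gamma^t \sum_{s'} \mathcal P(s(t)=s' \mid s(0)=s, \pi_\theta)\, \phi(s')$, where the $t$-step transition probability arises from composing $P^{\pi_\theta}$ with itself $t$ times. Taking the expectation over $s \sim \rho$ and comparing with the definition of the discounted occupancy measure in \eqref{eq:discounted ocupy prob}, I would identify $\sum_{t=0}^\infty \gamma^t \mathcal P(s(t)=s' \mid \rho, \pi_\theta) = \frac{1}{1-\gamma} d^{\pi_\theta}_\rho(s')$, which contributes the prefactor $1/(1-\gamma)$.

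Finally I would apply the log-derivative (likelihood-ratio) identity $\nabla_\theta \pi_\theta(a|s) = \pi_\theta(a|s)\, \nabla_\theta \log \pi_\theta(a|s)$ to rewrite $\phi(s')$ as $\sum_a \pi_\theta(a|s')\, Q^{\pi_\theta}(s',a)\, \nabla_\theta \log \pi_\theta(a|s')$, so that the sum over $s'$ weighted by $d^{\pi_\theta}_\rho$ and the sum over $a$ weighted by $\pi_\theta(\cdot|s')$ combine into the stated expectation $\mathbb E_{s\sim d^{\pi_\theta}_\rho,\, a\sim \pi_\theta(\cdot|s)}[\,\cdot\,]$, completing the identity.

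The step I expect to be the main obstacle is the unrolling itself: carefully justifying that the recursion may be iterated to infinity and that the infinite sum over $t$ may be interchanged with the sums over states and actions. Because $\gamma < 1$, the rewards are bounded, and the state and action spaces are finite, everything converges absolutely and these interchanges are legitimate; the only real care needed is the bookkeeping that matches the composed $t$-step kernel applied to $\rho$ with the occupancy measure $d^{\pi_\theta}_\rho$ up to the factor $1/(1-\gamma)$. Everything else reduces to the product rule and the log-derivative trick.
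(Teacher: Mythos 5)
Your proof is correct. Note that the paper does not supply its own proof of this statement: it is imported verbatim as the classical policy gradient theorem from \cite{RicDavSat_99}, and your argument---differentiating $V^{\pi_\theta}(s)=\sum_a \pi_\theta(a|s)Q^{\pi_\theta}(s,a)$ together with the Bellman identity, unrolling the resulting recursion under the induced kernel $P^{\pi_\theta}$, matching the geometric sum of $t$-step kernels with $\frac{1}{1-\gamma}d^{\pi_\theta}_\rho$ as defined in \eqref{eq:discounted ocupy prob}, and finishing with the log-derivative trick---is precisely the standard derivation from that reference, with the interchange of limits justified exactly as you say for finite state/action spaces and $\gamma<1$.
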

Theorem \ref{thm:pg} has motivated policy gradient methods in the single-agent setting (e.g., \cite{KonTsi_20}) and multi-agent setting (e.g., \cite{LowWuTam_17}). However, it is infeasible to be applied for a large-scale multi-agent network due to the large global state space. Therefore, we decompose the policy gradient in the following lemma. The proof can be found in Appendix \ref{app:dec pg}. 
\begin{lemma}
Given a multi-agent network, where each agent $n$ is associated with a $\gamma$-discounted Markov decision process defined by $(\mathcal S_n, \mathcal A_n, r_n, \mathcal P_n, \gamma),$ the neighborhood of agent $n$ is defined by $\mathcal{N}(n),$ the network reward function is $r(s,a) = \frac{1}{N}\sum_{n=1}^N r_n(s_{\mathcal N(n)}, a_{\mathcal N(n)})$ and the transition kernel is $\mathcal P(s'|s,a) = \prod_{n=1}^N \mathcal P_n(s'_n|s_n,a_n).$ The policy of a network is $\pi_\theta: \mathcal S \to \mathcal A$ with each agent $n$ using a local policy $\pi_{\theta_n}: \mathcal S_n \to \mathcal A_n,$ we have
\begin{align*}
   \nabla_{\theta_n} V^{\pi_\theta}(\rho)
   =& \frac{1}{1-\gamma} \mathbb E_{s\sim d^{\pi_\theta}_\rho, a \sim \pi_\theta(\cdot |s)} \left[ \frac{1}{N} \sum_{k\in \mathcal N(n)} Q^{\pi_\theta}_k(s_{\mathcal N(k)}, a_{\mathcal N(k)}) \nabla_{\theta_n} \log \pi_{\theta_n}(a_n|s_n)\right] \\
   =& \frac{1}{1-\gamma} \mathbb E_{s\sim d^{\pi_\theta}_\rho, a \sim \pi_\theta(\cdot |s)} \left[ \frac{1}{N} \sum_{k\in \mathcal N(n)} A^{\pi_\theta}_k(s_{\mathcal N(k)}, a_{\mathcal N(k)}) \nabla_{\theta_n} \log \pi_{\theta_n}(a_n|s_n)\right] 
\end{align*}\label{lem:dec pg}
\end{lemma}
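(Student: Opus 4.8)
The plan is to start from the classical policy gradient theorem (Theorem \ref{thm:pg}), restrict the gradient to the block $\theta_n$, and then exploit two structural facts: the product form of the global policy and the neighborhood-local decomposition of the $Q$-function established in Lemma \ref{lem:dec q}. First I would observe that since $\pi_\theta(a|s) = \prod_{m=1}^N \pi_{\theta_m}(a_m|s_m)$, the log-policy splits additively as $\log \pi_\theta(a|s) = \sum_{m=1}^N \log \pi_{\theta_m}(a_m|s_m)$, so only the $m=n$ summand survives differentiation and $\nabla_{\theta_n} \log \pi_\theta(a|s) = \nabla_{\theta_n} \log \pi_{\theta_n}(a_n|s_n)$. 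Substituting this together with the decomposition $Q^{\pi_\theta}(s,a) = \frac{1}{N}\sum_{k=1}^N Q^{\pi_\theta}_k(s_{\mathcal N(k)}, a_{\mathcal N(k)})$ into Theorem \ref{thm:pg} yields a full sum over all $k$; the remaining work is to show that every term with $k \notin \mathcal N(n)$ integrates to zero.

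The key step is this vanishing argument. Fix $k \notin \mathcal N(n)$. By Lemma \ref{lem:dec q}, $Q^{\pi_\theta}_k$ depends on the action profile only through $a_{\mathcal N(k)}$, and since the interaction graph is undirected we have $n \notin \mathcal N(k)$, so $Q^{\pi_\theta}_k$ carries no dependence on $a_n$. Because the policy factorizes across agents, conditioning on the state $s$ and on all actions $a_{-n}$ other than agent $n$'s leaves $a_n \sim \pi_{\theta_n}(\cdot|s_n)$ as the only remaining randomness, and the standard score-function identity gives $\mathbb E_{a_n \sim \pi_{\theta_n}(\cdot|s_n)}[\nabla_{\theta_n} \log \pi_{\theta_n}(a_n|s_n)] = \nabla_{\theta_n}\sum_{a_n}\pi_{\theta_n}(a_n|s_n) = \nabla_{\theta_n} 1 = 0$. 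Pulling the $a_n$-independent factor $Q^{\pi_\theta}_k$ out of the inner expectation therefore annihilates the entire term. Retaining only the surviving indices leaves exactly the sum over $k\in\mathcal N(n)$, which is the first displayed identity.

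For the second identity I would subtract a baseline. Writing $A^{\pi_\theta}_k = Q^{\pi_\theta}_k - V^{\pi_\theta}_k$, the advantage-based expression differs from the $Q$-based one by the correction term $-\frac{1}{N(1-\gamma)}\mathbb E_{s\sim d^{\pi_\theta}_\rho, a\sim\pi_\theta(\cdot|s)}\bigl[\sum_{k\in\mathcal N(n)} V^{\pi_\theta}_k(s_{\mathcal N(k)})\,\nabla_{\theta_n}\log\pi_{\theta_n}(a_n|s_n)\bigr]$. Since each $V^{\pi_\theta}_k$ is a function of the states $s_{\mathcal N(k)}$ alone and has no dependence on any action, the identical score-function identity applied to the inner $a_n$-integral kills this correction for every $k$, so the $Q$- and $A$-versions coincide.

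I expect the vanishing argument in the second paragraph to be the crux. The only genuinely delicate points are (i) justifying that the expectation factors into an inner integral over $a_n$ times the remainder, which rests on the product structure $\pi_\theta(a|s)=\prod_m\pi_{\theta_m}(a_m|s_m)$, and (ii) invoking graph symmetry so that $k\notin\mathcal N(n)$ implies $n\notin\mathcal N(k)$ and hence $Q^{\pi_\theta}_k$ truly does not depend on $a_n$. Everything else reduces to a direct substitution from Theorem \ref{thm:pg} and Lemma \ref{lem:dec q}.
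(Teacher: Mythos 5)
Your proof is correct and takes essentially the same route as the paper's: start from Theorem \ref{thm:pg}, use the product-policy identity $\nabla_{\theta_n}\log\pi_\theta(a|s)=\nabla_{\theta_n}\log\pi_{\theta_n}(a_n|s_n)$, substitute the decomposition of Lemma \ref{lem:dec q}, and kill the $k\notin\mathcal N(n)$ terms by the score-function identity $\sum_{a_n}\nabla_{\theta_n}\pi_{\theta_n}(a_n|s_n)=0$, which is exactly the paper's factoring argument written as a conditional expectation. The only cosmetic difference is the advantage form: you subtract the baseline $V^{\pi_\theta}_k$ and show the correction vanishes, whereas the paper invokes the advantage version of Theorem \ref{thm:pg} and repeats the same steps; the two are equivalent.
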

\begin{remark}
Lemma \ref{lem:dec pg} implies that policy gradient of agent $n$ could be computed/estimated via exchanging information $Q^{\pi_\theta}(s_{\mathcal N(k)},a_{\mathcal N(k)})$ or $A^{\pi_\theta}(s_{\mathcal N(k)},a_{\mathcal N(k)}), \forall k\in \mathcal N(n)$ with its neighbors. This decomposition is the key to implement the efficient paradigm of distributed learning and distributed execution and motivate the algorithms in the following sections.   
\end{remark}

\section{Distributed Multi-Agent Policy Gradient Algorithms}
In this section, we introduce distributed policy gradient based algorithms (DPG) for the multi-agent system. 
Before introducing the algorithms, we assume that the initial state distribution $\mu$ induced by  our policy is strictly positive:
\begin{assumption}
The initial state distribution $\mu$ satisfies $\min_{s_n} \mu(s_n) > 0.$ for any agent $n.$
\end{assumption}
This assumption imposes the sufficient exploration
for the state space for each agent and is common in the literature \cite{AgaKakLee_20,MeiXiaSze_20}. 
We first propose a regularized distributed policy gradient algorithm assuming the access of an inexact gradient, and then introduce TD-learning methods to estimate the gradient.

\subsection{Regularized Distributed Policy Gradient Algorithm with Inexact Gradient}\label{sec:DMAPG inexact}
Assume an estimated gradient $\nabla \hat V^{\pi_{\theta^t}}(\rho)$ at any time $t,$ we study relative entropy regularized objective for the network as in \cite{AgaKakLee_20,ZhaKinODo_20} that  $$L_\lambda(\theta) = V^{\pi_\theta}(\rho) + \sum_{n=1}^N\frac{\lambda}{|\mathcal S_n||\mathcal A_n|}\sum_{s_n,a_n}\log \pi_{\theta_n}(a_n|s_n),$$ where $\lambda$ is a positive constant and $\log \pi_{\theta_n}(a_n|s_n)$ is the regularizer to prevent the probability of taking action $a_n$ at state $s_n$ approaches to a small quantity for any agent $n.$ 
With the regularized value function $L_\lambda(\theta),$ we present a DPG with the inexact gradient in Algorithm \ref{alg: inexact PG}, where the inexact gradient is usually estimated with TD-learning methods (see Section \ref{sec:DMAAC}).
\begin{algorithm}
    \caption{Distributed Multi-Agent Policy Gradient with Inexact Gradient}
	\begin{algorithmic}[1]
		\STATE {\bf Initialization:}   parameters $\eta,$ $\lambda,$ and $\theta^0_n, \forall n \in \mathcal I.$
		\FOR{$t = 1,\dots,T$}
		\STATE{\bf Estimate Policy Gradient:} $\nabla_{\theta_n} \hat L_\lambda(\theta^t)$ for each agent $n \in \mathcal I.$
		\STATE{\bf Update:} $\theta_n^{t+1} = \theta_n^{t} + \eta \nabla_{\theta_n} \hat L_\lambda(\theta^t).$
		\ENDFOR
	\end{algorithmic}
	\label{alg: inexact PG}
\end{algorithm}

By analyzing the dynamics of $\nabla L^{\pi_{\theta^t}}(\rho)$ in Algorithm \ref{alg: inexact PG}, we can establish the upper bound on the cumulative $\mathbb E\left[||\nabla  L^{\pi_{\theta^t}}(\rho)||^2\right]$ in Theorem \ref{thm: inexact local}. The proof can be found in Appendix \ref{app:inexact local}.
\begin{theorem}\label{thm: inexact local}
Let $\beta' = \frac{48N^2}{(1-\gamma)^3} + \sum_{n}\frac{2\lambda}{|\mathcal S_n|}$ and $\epsilon_t = \mathbb E\left[|| \nabla \hat L^{\pi_{\theta^t}}(\rho) - \nabla L^{\pi_{\theta^t}}(\rho)||^2\right].$ We have Algorithm \ref{alg: inexact PG} with the learning rate $\eta \leq 1/\beta'$ such that
$$\sum_{t=1}^T \mathbb E \left[ ||\nabla L^{\pi_{\theta^t}}(\rho)||^2\right] \leq \sum_{t=1}^T \epsilon_t + 2\beta'\left(L^*(\rho) - L^{\pi^1}(\rho)\right).$$
\end{theorem}
Theorem \ref{thm: inexact local} implies a local convergence result of Algorithm \ref{alg: inexact PG} given a reasonable good gradient estimator, e.g., $\sum_{t=1}^T\epsilon_t = O(\sqrt{T}),$ which is related to the quality of estimated value functions according to Lemma \ref{lem:dec pg}. 
Next, we utilize TD-learning to estimate value functions, provide an actor-critic type of algorithm, and establish its local convergence. 

\subsection{Regularized Distributed Actor Critic Algorithm}\label{sec:DMAAC}
Motivated by \cite{QuWieLi_20}, we utilize TD-learning to estimate the gradient $\nabla_{\theta_n} \hat V^{\pi_\theta}(\rho)$  according to Lemma \ref{lem:dec pg} and provide an actor-critic type of algorithm (TD-RDAC) for multi-agent setting in Algorithm \ref{alg: TD PG}.
From Lemma \ref{lem:dec pg}, we have 
\begin{align*}
&\nabla_{\theta_n} L^{\pi_\theta}(\rho) 
= \nabla_{\theta_n} V^{\pi_\theta}(\rho) + \frac{\lambda}{|\mathcal S_n||\mathcal A_n|}\sum_{s_n,a_n}\nabla_{\theta_n}\log \pi_{\theta_n}(a_n|s_n)\\
&=
\frac{1}{1-\gamma} \mathbb E_{s\sim d^{\pi_\theta}_\rho} \left[ \frac{1}{N} \sum_{k\in \mathcal N(n)} A^{\pi_\theta}(s_{\mathcal N(k)}, a_{\mathcal N(k)}) \nabla_{\theta_n} \pi_{\theta_n}(a_n|s_n)\right] + \frac{\lambda}{|\mathcal S_n|}\left(\frac{1}{|\mathcal A_n|} - \pi_{\theta_n}\right).
\end{align*}
For each agent $n,$ it requires to aggregate the advantage functions (or the estimator) from its neighbors, i.e., $A^{\pi_\theta}(s_{\mathcal N(k)}, a_{\mathcal N(k)})$ for any $k \in N(n)$ to estimate the gradient $\nabla_{\theta_n} \hat V^{\pi_\theta}(\rho)$ or $\nabla_{\theta_n} \hat L^{\pi_\theta}(\rho).$ The actor-critic algorithm is summarized in Algorithm \ref{alg: TD PG}. It contains $T$ outer loops and $H$ inner loops. The lines $4$ to $8$ perform TD learning to estimate value function for each individual agent. The line $10$ is to compute TD-error based on the learned value function for each agent. The line $11$ is to estimate the policy gradient by aggregating its neighbors' TD-error. The line $12$ is to update the policy parameters. The inner loops (lines $3$ to $13$) will be run $T$ rounds. 
\begin{algorithm}[H]
    \caption{Regularized Distributed Actor-Critic Algorithm}
	\begin{algorithmic}[1]
		\STATE {\bf Initialization:} $\theta^0,\lambda,$ and $\eta.$
		\FOR {$t = 1,\dots,T$}
		\STATE {\bf Initialize $\hat V_0^{\pi_{\theta^t}}(\cdot) = 0$ and sample a state $s^0 \sim \rho.$} 
		\FOR {$h = 1,\dots,H$}
		\FOR {$n = 1,\dots,N$}
		\STATE {\bf Update value function for each agent at step $h:$}
		\begin{align}
        &\hat V^{\pi_{\theta^t},h+1}_{n}(s_{\mathcal N(n)}(h)) \nonumber \\
        &= (1-\alpha)\hat  V^{\pi_{\theta^t},h}_{n}(s_{\mathcal N(n)}(h)) + \alpha \left(r_n^h(s_{\mathcal N(n)}(h), a_{\mathcal N(n)}(h)) + \gamma \hat V^{\pi_{\theta^t},h}_{n}(s_{\mathcal N(n)}(h+1)\right), \nonumber 
        \end{align}
		\ENDFOR
		\ENDFOR
		\FOR{$n = 1,\dots,N$}
		\STATE {\bf Compute TD-error at each step $h$:
		\begin{align}
        &\delta^{\pi_{\theta^t}}_n(s_{\mathcal N(n)}(h), a_{\mathcal N(n)}(h)) \nonumber \\
        &= r_n^h(s_{\mathcal N(n)}(h), a_{\mathcal N(n)}(h)) + \gamma \hat V^{\pi_{\theta^t},H}_{n}(s_{\mathcal N(n)}(h+1)) - \hat V^{\pi_{\theta^t},H}_{n}(s_{\mathcal N(n)}(h)), \nonumber 
        \end{align}
		}
		\STATE {\bf Estimate policy gradient by aggregating its neighbors' TD-errors:
		\begin{align}
        &\nabla_{\theta_n} \hat V^{\pi_{\theta^t}}(\rho) \nonumber \\
        &= \frac{1}{(1-\gamma)} \sum_{h=1}^H\gamma^h \frac{1}{N}\sum_{k\in \mathcal N(n)} \hat \delta^{\pi_{\theta^t}}_k(s_{\mathcal N(k)}(h), a_{\mathcal N(k)}(h)) \nabla_{\theta_n} \log \pi_{\theta_n}(a_n(h)|s_n(h)), \nonumber 
        \end{align}}
		\STATE {\bf Update:} $\theta_n^{t+1} = \theta_n^{t} + \eta \nabla_{\theta_n} \hat V^{\pi_{\theta^t}}(\rho)+ \frac{\lambda}{|\mathcal S_n|}\left(\frac{1}{|\mathcal A_n|} - \pi_{\theta_n}\right).$
		\ENDFOR
	\ENDFOR
	\end{algorithmic}
	\label{alg: TD PG}
\end{algorithm}
Intuitively, Algorithm \ref{alg: TD PG} requires a large number of inner loops $H$ to guarantee the convergence of value function such that the estimated policy gradient is accurate. 
Before proving the local convergence results of Algorithm \ref{alg: TD PG}, we introduce a common Assumption \ref{assump: fast mixing} (e.g., \cite{SriYin_19,QuLinAda_20}) on the mixing time of Markov decision process. 
\begin{assumption}\label{assump: fast mixing}
Assume $\{s_n(h)\}_h, \forall n,$ be a Markov chain with the geometric mixing time under any policy $\pi_{\theta_n},$ i.e., there exists a constant $K$ such that for any $h \geq K\log (1/\epsilon),$ the total variance distance of $\mathcal P(z^h_n | z^0_n)$ and $\pi_{\theta_n}(z^h_n)$ satisfies
$$\|\mathcal P(z^h_n | z^0_n) - \pi(z^h_n)\|_{TV} \leq \epsilon, \forall n.$$
\end{assumption}
To avoid the complicated conditions on $T$ and $H,$ we present the order-wise results in Theorem \ref{thm: main TD}. The proofs and the detailed requirements on $T$ and $H$ can be found in Appendix \ref{app:TDMAPG}. 
\begin{theorem}\label{thm: main TD}
Suppose large $T$ and $H=O(T).$ Let the learning rate be $\eta = O(1/\sqrt{T}).$ We have Algorithm \ref{alg: TD PG} such that
\begin{align*}
\min_{1\leq t \leq T} \mathbb E[ ||\nabla L^{\pi_{\theta^t}}(\rho)||^2 ] 
\leq& O\left(\frac{NS_{\max}A_{\max}}{(1-\gamma)^4} \sqrt{\frac{\log T}{T}}\right)
\end{align*}
\end{theorem}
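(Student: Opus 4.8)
The plan is to mirror the argument already used for Theorem \ref{thm: inexact grad}, since Algorithm \ref{alg: TD PG} is just the instantiation of Algorithm \ref{alg: inexact PG} in which the inexact gradient $\nabla_{\theta_n}\hat L^{\pi_{\theta^t}}(\rho)$ is produced by the TD critic (lines 4--11) rather than by an abstract oracle. Accordingly, I would start from the regret decomposition \eqref{eq:regret diff}, which splits the cumulative suboptimality over the ``small-gradient'' set $\mathcal E = \{t\in[T]: \|\nabla L^{\pi_{\theta^t}}(\rho)\|\le \epsilon\}$ with $\epsilon = \lambda/(2S_{\max}A_{\max})$ and its complement $\mathcal E^c$. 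On $\mathcal E$, Lemma \ref{lem: converge cond} bounds each term by $2\lambda\|d^{\pi^*}_\rho/\mu\|_\infty^2/\min_{s,t}\pi_{\theta^t}(a^*(s)|s)$, giving an $O(T\lambda)$ contribution; on $\mathcal E^c$ each term is at most $1/(1-\gamma)$, so the contribution is $\mathbb E[|\mathcal E^c|]/(1-\gamma)$.

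The crux is then to bound $\mathbb E[|\mathcal E^c|]$. By a Markov/Chebyshev-type step, since $\|\nabla L^{\pi_{\theta^t}}(\rho)\|>\epsilon$ for every $t\in\mathcal E^c$, we have $\epsilon^2|\mathcal E^c|\le\sum_{t=1}^T\|\nabla L^{\pi_{\theta^t}}(\rho)\|^2$, whence $\mathbb E[|\mathcal E^c|]\le (4S_{\max}^2A_{\max}^2/\lambda^2)\sum_{t=1}^T\mathbb E[\|\nabla L^{\pi_{\theta^t}}(\rho)\|^2]$ --- exactly \eqref{eq:outside time} with Lemma \ref{lem: inexact local} replaced by its TD-learning counterpart. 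Here I would simply invoke Lemma \ref{lem: main TD}, which under $H=O(T)$ and $\eta=O(1/\sqrt T)$ guarantees $\sum_{t=1}^T\mathbb E[\|\nabla L^{\pi_{\theta^t}}(\rho)\|^2]\le O\big(\tfrac{NS_{\max}A_{\max}}{(1-\gamma)^4}\sqrt{T\log T}\big)$.

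Putting the two pieces together yields a regret bound of the form $O(T\lambda)+O\big(\tfrac{S_{\max}^3A_{\max}^3N}{\lambda^2(1-\gamma)^5}\sqrt{T\log T}\big)$. The first term grows in $\lambda$ and the second decays as $\lambda^{-2}$, so I would optimize by setting $\lambda=\Theta(T^{-1/6})$ (absorbing the $S_{\max},A_{\max},\gamma$ factors into the constant, as the statement reports $N,\gamma$ order-wise). This makes the first term $\Theta(T^{5/6})$ and the second $\Theta(T^{1/3}\cdot\sqrt{T\log T})=\Theta(T^{5/6}\sqrt{\log T})$; collecting the headline $S_{\max},A_{\max}$ and $(1-\gamma)$ dependence gives the claimed $O\big(\tfrac{S_{\max}A_{\max}}{(1-\gamma)^3}T^{5/6}\log T\big)$, with the $\sqrt{\log T}$ subsumed into $\log T$.

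The genuinely hard part is not this balancing step but Lemma \ref{lem: main TD}, which is where all the TD-learning error enters. Establishing it requires showing the critic (lines 4--8) produces value estimates accurate enough that the aggregated-TD-error gradient estimate (line 11) has small mean-squared error $\epsilon_t$; this is where the geometric mixing Assumption \ref{assump: fast mixing} is used to control both the bias from finite $H$ and the Markovian sampling noise, and where the decomposition Lemma \ref{lem:dec pg} ensures the per-agent errors aggregate over only the neighborhood rather than the whole network. Since Lemma \ref{lem: main TD} is available to me here, the remaining work for Theorem \ref{thm: main TD} is exactly the decomposition-and-optimization sketched above.
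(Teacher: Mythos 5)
Your proposal is correct and takes essentially the same route as the paper: the paper's own proof combines the regret decomposition \eqref{eq:regret diff}, the Markov-type bound \eqref{eq:outside time}, Lemma \ref{lem: converge cond}, and Lemma \ref{lem: main TD}, then balances the two terms by choosing $\lambda = \frac{2S_{\max}A_{\max}(\log T)^{1/6}}{(1-\gamma)^2 T^{1/6}}$. The only (cosmetic) difference is that the paper builds the $S_{\max}A_{\max}/(1-\gamma)^2$ factor into $\lambda$ explicitly so that the $S_{\max}^2A_{\max}^2/\lambda^2$ prefactor in the $\mathcal E^c$ term cancels, which is precisely the ``absorbing the $S_{\max},A_{\max},\gamma$ factors into the constant'' step you gesture at in your balancing.
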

Theorem \ref{thm: main TD} concludes 
Algorithm \ref{alg: TD PG} will return a stationary policy for a large $T.$ Moreover, the bound only depends on the local dimension of states and actions, i.e., $S_{\max}$ and $A_{\max}$, which means our algorithm is scalable and could be applied to large-scale networks. 

\section{Experiments}\label{sec:sim}
In this section, we evaluate TD-RDAC for real-time access control problem and distributed power control problem in wireless networks as described in Section \ref{sec: model}. 
\begin{figure}
\centering
\includegraphics[scale=0.73]{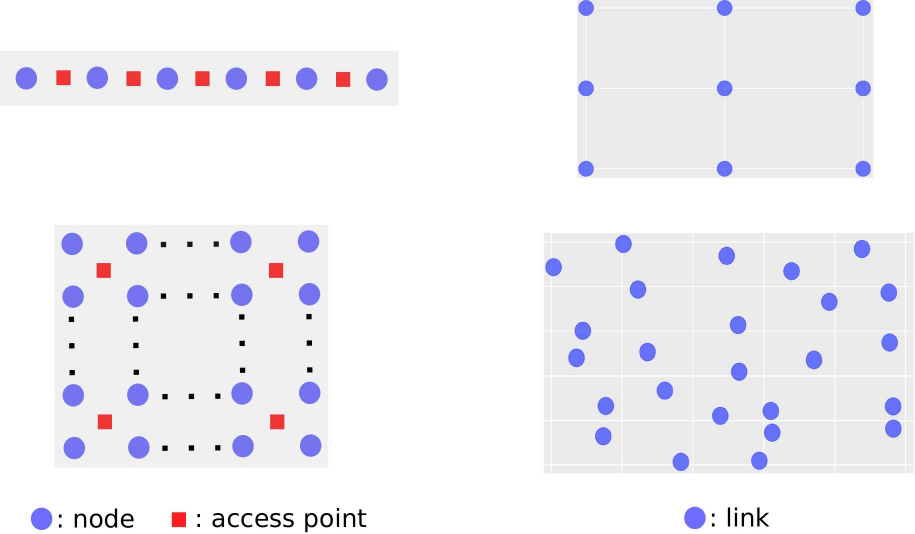}
\caption{Wireless network topology for real-time access control (left-hand side) and power control (right-hand side).}
\label{fig: topology}
\end{figure}

{\noindent \bf Real-time access control in wireless network:} We consider the reward/utility of node $n$ defined in \eqref{eq: rta obj}.  
We compared Algorithm \ref{alg: TD PG} with the ALOHA algorithm \cite{Rob_75} and a scalable actor critic (SAC) in \cite{QuWieLi_22}.
The classical distributed solution to this problem is the localized ALOHA algorithm \cite{Rob_75}, where node $n$ sends the earliest packet with a certain probability and the packet is sent to a random access point $m \in \mathcal {AP}(n)$ in its available set, with probability proportional to $q_m$ and inversely proportional to the number of nodes that can potentially transmit to access point $m.$ Note to compare ALOHA algorithm \cite{Rob_75} and SAC in \cite{QuWieLi_22}, we apply Algorithm \ref{alg: TD PG} into a slightly different setting where a packet is removed from its queue if either it is delivered successfully or expired. 

In our experiment, we considered two types of network topology: a line network and a grid network as shown in the lefthhand side of Figure \ref{fig: topology}. The tabular method is used to represent value functions since the size of the table is tractable in this experiment. 

The line network has $6$ nodes and $5$ access points as shown in the left-up of Figure \ref{fig: topology}. We considered two settings in the line network with the same arrival probability but distinct success transmission probabilities to represent reliable/unreliable environments. Specifically, the arrival probabilities are $w = [0.5, 0.3, 0.5, 0.5, 0.3, 0.5];$ the transmission probabilities of the reliable/unreliable environments are $q = [0.9, 0.95, 0.9, 0.95, 0.9]$ and $q = [0.5, 0.6, 0.7, 0.6, 0.5],$ respectively. The deadline of packets is $d=2.$  We ran $9$ different instances with different random seeds and the shaded region represents the $95\%$ confidence interval. We observe our algorithm increases steady in Figures \ref{fig: line high} and \ref{fig: line low} and outperforms both SAC algorithm in \cite{QuWieLi_22} and ALOHA algorithm in the reliable and unreliable environment, where the converged rewards of (our algorithm v.s. SAC v.s. ALOHA) are ($0.814$ v.s. $0.735$ v.s. $0.334$) and ($0.503$ v.s. $0.464$ v.s. $0.222$) for the reliable and unreliable environment, respectively.  

The grid network is similar as in \cite{QuWieLi_22}, shown in the left-bottom of Figure \ref{fig: topology}. The network has $36$ nodes and $25$ access points. The arrival probability $w_n$ of node $n$ and success transmission probability $q_m$ of access point $m$ are generated uniformly random from $[0, 1].$ The deadline of packets is also set $d=2.$
We observe our algorithm again increases steady in Figure \ref{fig: 36 nodes} even with the relatively dense environment and outperforms both SAC algorithm in \cite{QuWieLi_22} and ALOHA algorithm \cite{Rob_75}, where the converged rewards of (our algorithm v.s. SAC v.s. ALOHA) are ($0.393$ v.s. $0.369$ v.s. $0.138$). 

\begin{figure}[H]
\centering
\begin{subfigure}{0.32\textwidth}
    \includegraphics[width=\textwidth]{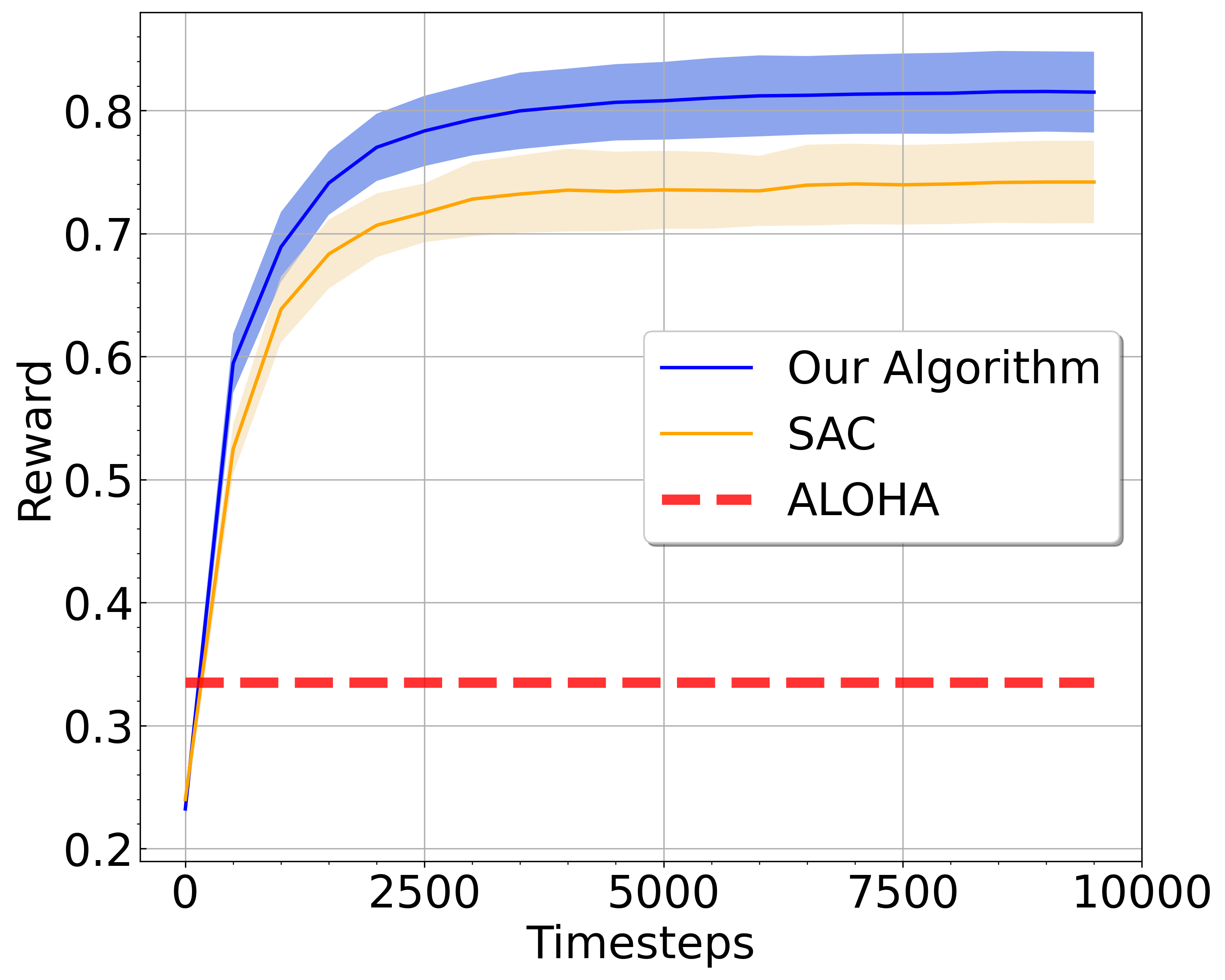}
    \caption{reliable line access network}
\label{fig: line high}
\end{subfigure}
\hfill
\begin{subfigure}{0.32\textwidth}
    \includegraphics[width=\textwidth]{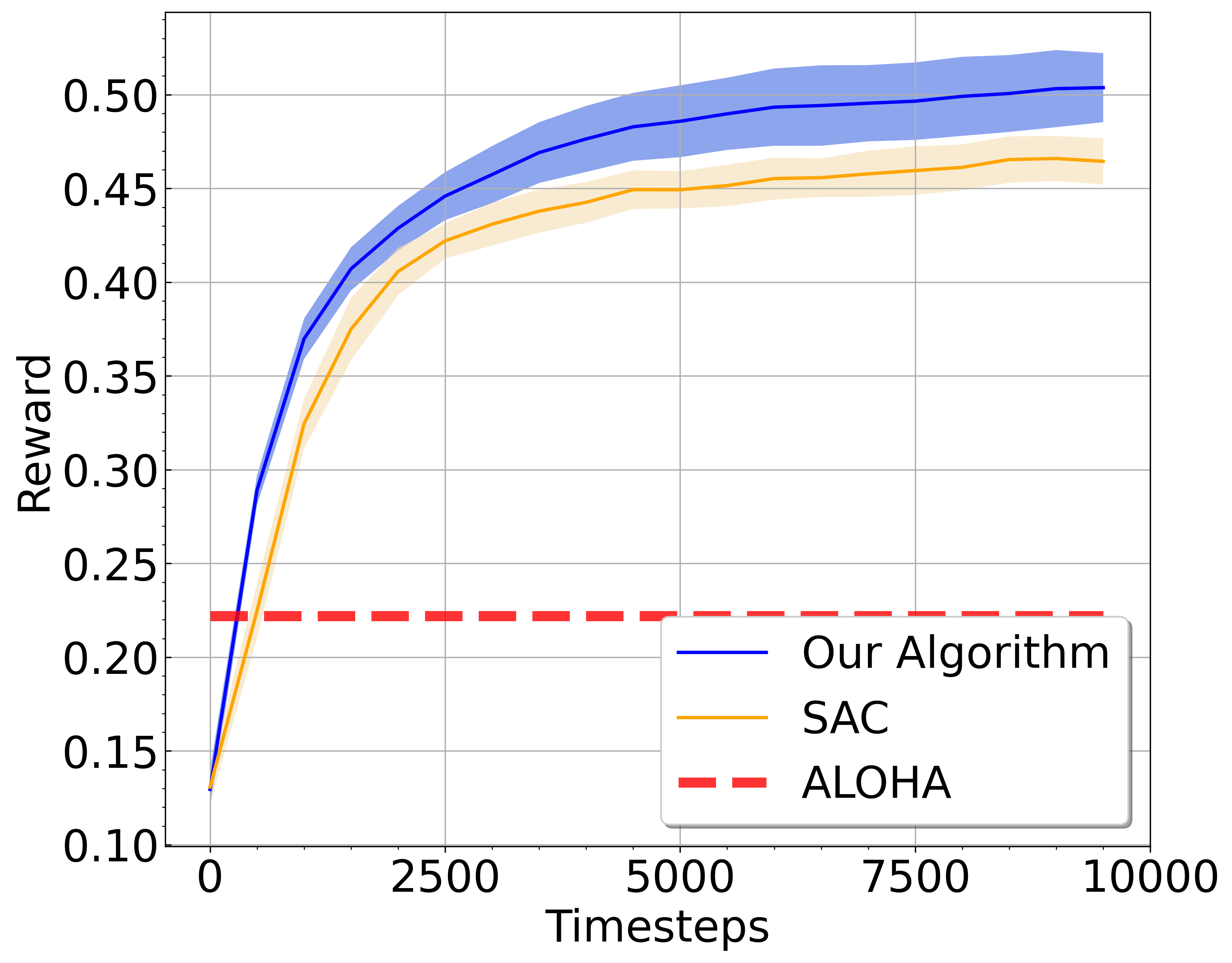}
    \caption{unreliable line access network}
\label{fig: line low}
\end{subfigure}
\hfill
\begin{subfigure}{0.32\textwidth}
    \includegraphics[width=\textwidth]{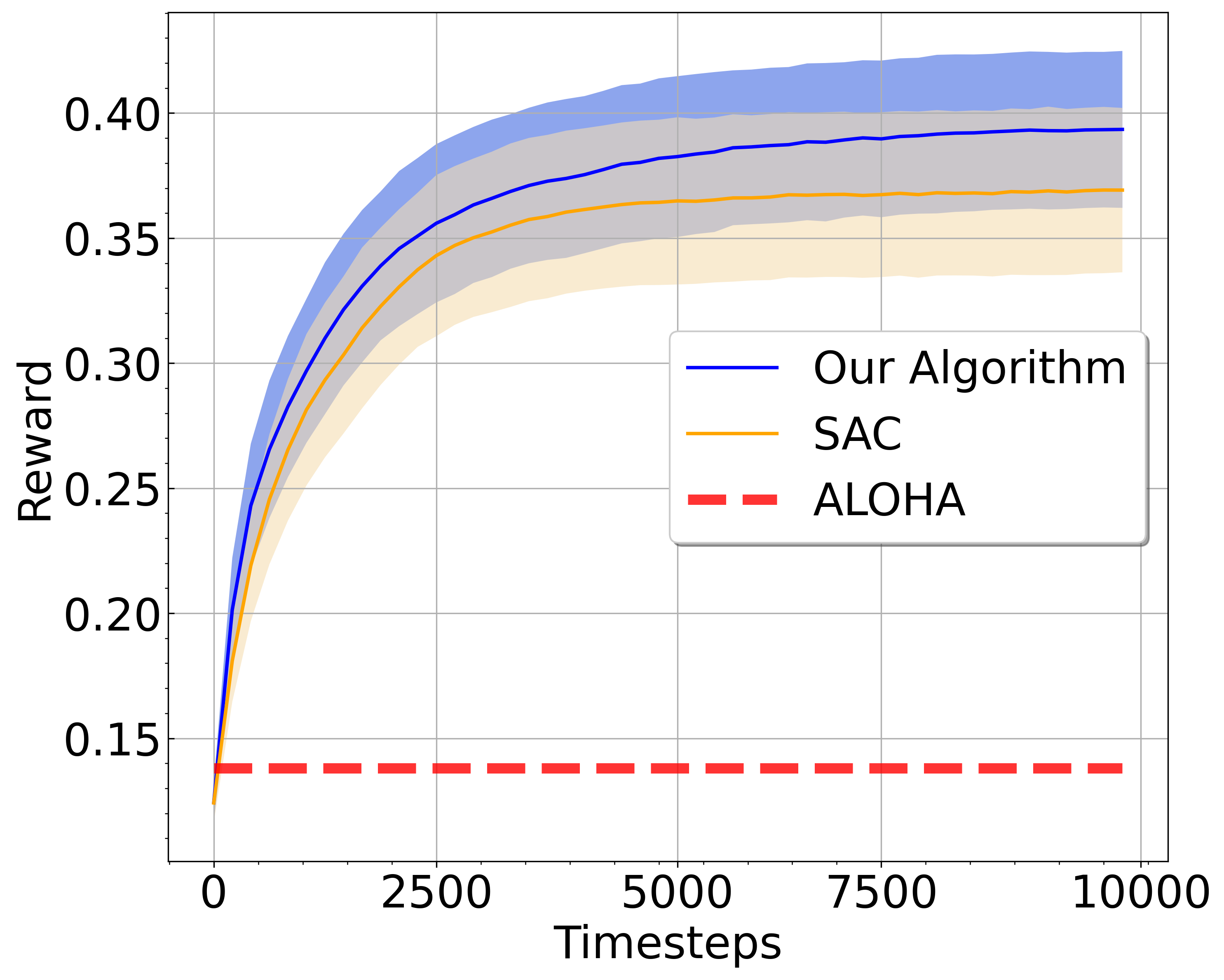}
    \caption{grid access network}
\label{fig: 36 nodes}
\end{subfigure} 
\caption{Performance Comparison in Real-time Access Network}
\end{figure}

\begin{figure}[H]
\centering
\begin{subfigure}{0.32\textwidth}
    \includegraphics[width=\textwidth]{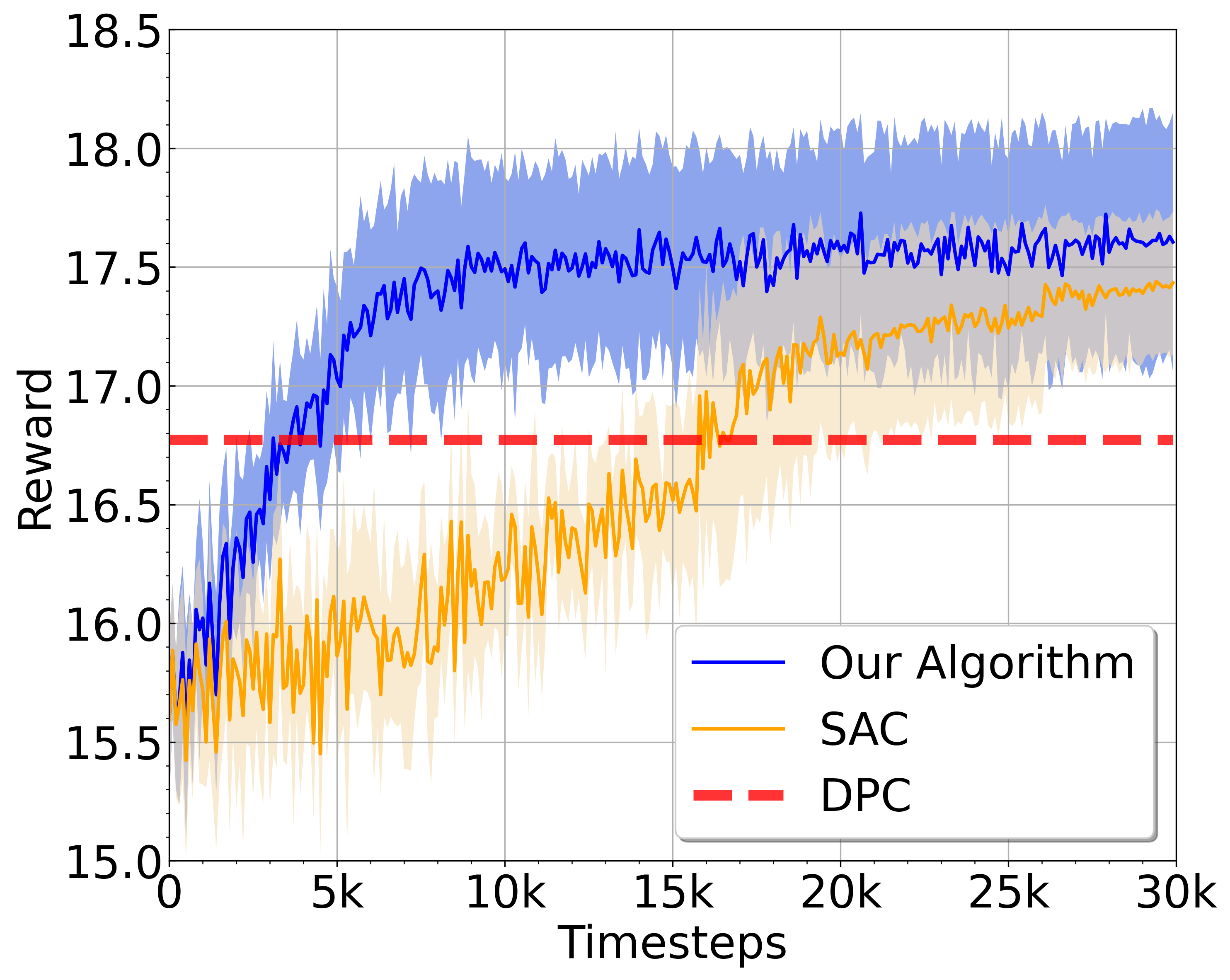}
    \caption{6-link network}
\label{fig: perf 6 nodes}
\end{subfigure}
\hfill
\begin{subfigure}{0.32\textwidth}
    \includegraphics[width=\textwidth]{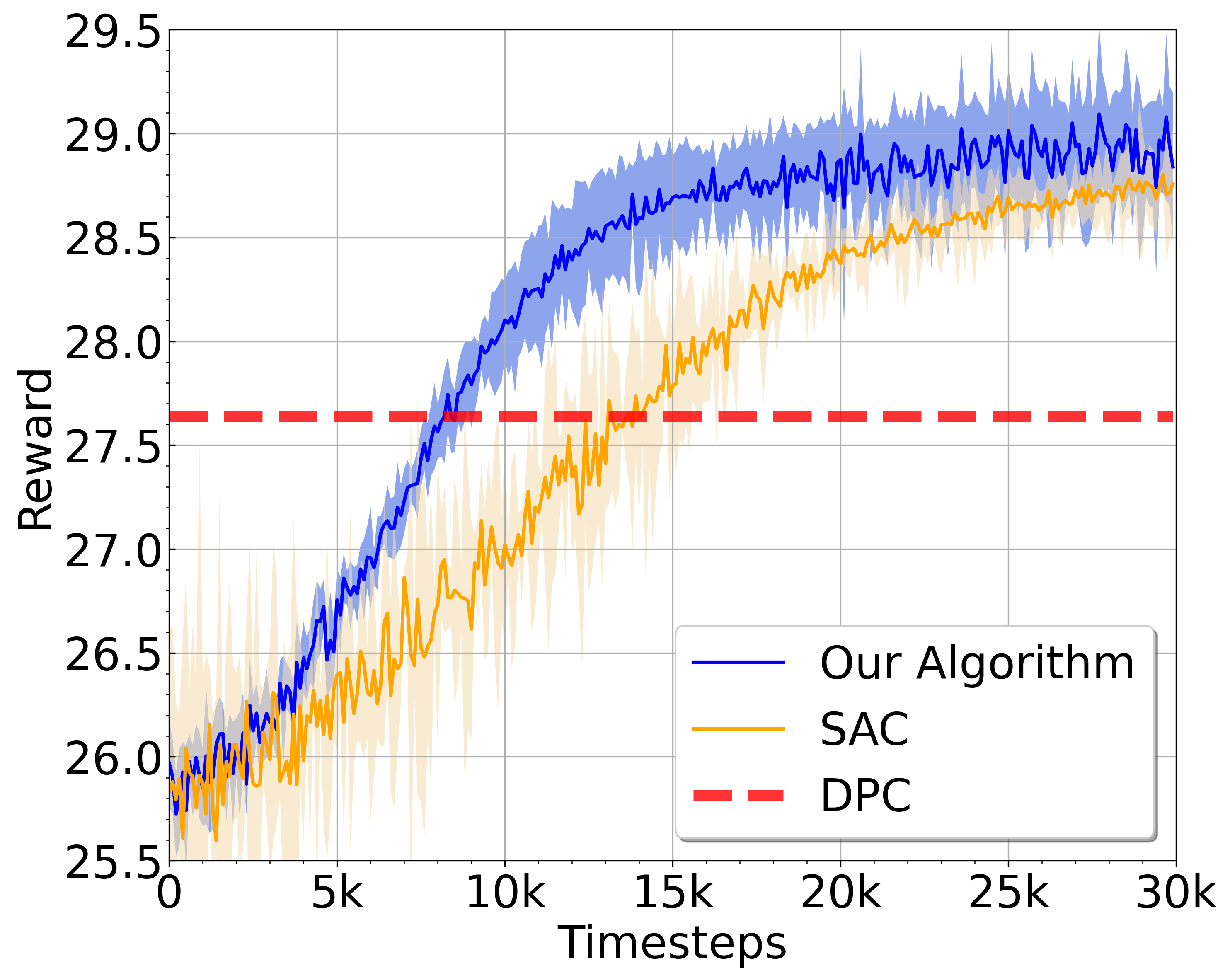}
    \caption{9-link network}
\label{fig: perf 9 nodes}
\end{subfigure}
\hfill
\begin{subfigure}{0.32\textwidth}
    \includegraphics[width=\textwidth]{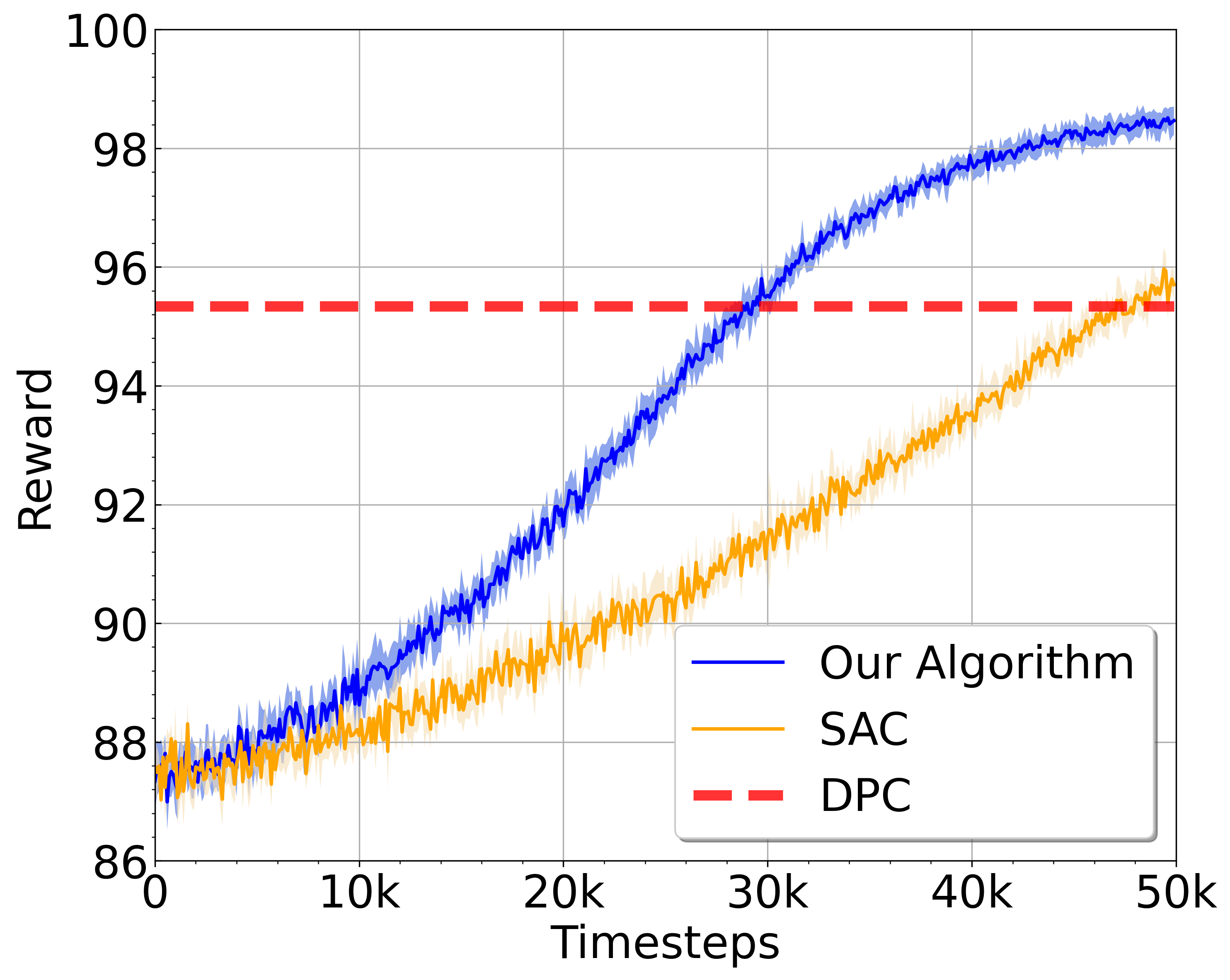}
    \caption{25-link network}
\label{fig: perf 25 nodes}
\end{subfigure}
\caption{Performance Comparison in Distributed Power Control}
\end{figure}

{\noindent \bf Distributed power control in wireless network:} We considered the reward/utility of each agent is the difference between a logarithmic function of signal-interference-noise ratio (SINR) and a linear pricing function proportional to the transmitted power as in \eqref{eq: pc obj}. 
In the literature, the problem has been formulated as a linear pricing game \cite{TanTamSri_01} and \cite{ChiHanLan_08}. 
We compared Algorithm \ref{alg: TD PG} with the DPC algorithm \cite{TanTamSri_01} and SAC in \cite{QuWieLi_22}.
In our experiment, we first studied a setting with $6$-link ($3 \times 2$ grids) 
where the tabular method is tractable to represent value functions. Then we studied two settings: $9$-link 
and $25$-link as shown in the right-hand side of Figure \ref{fig: topology}, where we use blue circles to denote links. For both settings, we use neural network (NN) approximation for value functions. With NN approximation, we utilized the replay buffer \cite{MniKavSil_13} and double-Q learning \cite{HadArtDav_16} to stabilize the learning process. 

For the reward/utility in \eqref{eq: pc obj}, we set $G_{n,n} = 1$ and  $G_{m,n} = \kappa / ({dist}_{m,n})^2$ with $\kappa=0.1$ as the fading factor and ${dist}_{m,n}$ is the distance between nodes $m$ and $n.$ The average power of the noise is $\sigma_n = 0.1.$ We set the trade-off parameter $u_n=0.1, \forall n.$ For each environment, we ran $9$ different instances with different random seeds and the shaded region represents the $ 95\%$ confidence interval.

For $6$-nodes network, 
we represent value functions with tables in Algorithm \ref{alg: TD PG}. We observe the learning processes are quite stable in Figure \ref{fig: perf 6 nodes}. We observe our algorithm outperforms the SAC and DPC algorithm as the training time increases: the converged reward are ($17.60$ v.s. $17.43$ v.s. $16.77$). 
For 9-nodes
, we approximate value functions with neural networks in Algorithm \ref{alg: TD PG}.  We observe Algorithm \ref{alg: TD PG} outperforms the SAC and DPC algorithms as the training time increases in Figure \ref{fig: perf 9 nodes} ($28.92$ v.s. $28.76$ v.s. $27.64$).
For 25 nodes, Figure \ref{fig: perf 25 nodes} shows the results with a heterogeneous topology. 
Again, we observe our algorithm increases steady in Figure \ref{fig: perf 25 nodes} and outperforms the SAC and DPC algorithms ($98.35$ v.s. $95.70$ v.s. $95.34$). 

Finally, we note that both ALOHA and DPC algorithms are model-based algorithm and the agents need to know the systems parameters (success transmitting probability in the real-time access control problem and channel gains and interference levels in the power control problem) while our algorithm is model-free and  does not need to know these parameters. 

\section{Conclusion}
In this paper, we studied a weakly-coupled multi-agent reinforcement learning problem where agents’ reward functions are coupled but the transition kernels are not. We propose TD-learning based regularized multi-agent policy actor-critic algorithm (TD-RDAC) that achieve the local convergence result. We demonstrated the effectiveness of TD-RDAC with two important applications in wireless networks: real-time access control and distributed power control and showed that our
algorithm outperforms the existing benchmarks in the literature.

\subsection*{Acknowledgements} 
The authors are very grateful to Rui Hu for his insightful discussion and comments. 

\bibliographystyle{plain}
\bibliography{ref.bib}

\newpage
\appendix

\section{Decomposition of Value Functions}
\subsection{Proof of Lemma \ref{lem:dec q}}\label{app:dec q}
\begin{proof}
We provide the proof of the decomposition on $Q$-function and the similar argument holds for the value function $V^{\pi_\theta}(s).$
According to the definitions of reward and $Q$-function, we have
\begin{align*} 
    Q^{\pi_\theta}(s,a) =& \frac{1}{N} \sum_{n=1}^N \mathbb E\left[\sum_{t=1}^{\infty}\gamma^t r_n(s_{\mathcal N(n)}^t, a_{\mathcal N(n)}^t) \Big| s^0 = s, a^0 = a\right] \\
    =& \frac{1}{N} \sum_{n=1}^N \mathbb E\left[\sum_{t=1}^{\infty}\gamma^t r_n(s_{\mathcal N(n)}^t, a_{\mathcal N(n)}^t) \Big|  s_{\mathcal N(n)}^0 = s_{\mathcal N(n)},  a_{\mathcal N(n)}^0 = a_{\mathcal N(n)}\right] \\
    =& \frac{1}{N} \sum_{n=1}^N Q^{\pi_\theta}_n( s_{\mathcal N(n)}, a_{\mathcal N(n)})
\end{align*}
where the second equality holds because the reward and transition of agent $n$ only depends on its neighbors' states and actions since for any trajectories of agent $n$ until $t',$ we have 
\begin{align*}
    \mathcal P(s^{0}, a^{0},\cdots, s_{\mathcal N(n)}^{t'}, a_{\mathcal N(n)}^{t'}) 
    &= \prod_{t=1}^{t'} \prod_{k\in \mathcal N(n)} {\pi_{\theta_k}(a_k^t|s_k^t) \mathcal P( s_k^t | s_k^{t-1},a_k^{t-1})} \\
    &= \prod_{k\in \mathcal N(n)} \prod_{t=1}^{t'}  {\pi_{\theta_k}(a_k^t|s_k^t) \mathcal P( s_k^t | s_k^{t-1},a_k^{t-1})}\\
    &= \mathcal P(s_{\mathcal N(n)}^{0}, a_{\mathcal N(n)}^{0},\cdots, s_{\mathcal N(n)}^{t'}, a_{\mathcal N(n)}^{t'}),
\end{align*}
and it implies $\mathcal P(s_{\mathcal N(n)}^{t'}, a_{\mathcal N(n)}^{t'})$ only depends on $(s_{\mathcal N(n)}^{0}, a_{\mathcal N(n)}^{0}).$ 
\end{proof}
\subsection{Proof of Lemma \ref{lem:dec pg}}\label{app:dec pg} 
\begin{proof}
According to Theorem \ref{thm:pg}, we have
\begin{align*}
   \nabla_{\theta_n} V^{\pi_\theta}(\rho) =& \frac{1}{1-\gamma} \mathbb E_{s\sim d^{\pi_\theta}_\rho, a \sim \pi_\theta(\cdot |s)} \left[ Q^{\pi_\theta}(s,a) \nabla_{\theta_n} \log \pi_\theta(a|s)\right] \\
   =& \frac{1}{1-\gamma} \mathbb E_{s\sim d^{\pi_\theta}_\rho, a \sim \pi_\theta(\cdot |s)} \left[ Q^{\pi_\theta}(s,a) \nabla_{\theta_n} \log \pi_{\theta_n}(a_n|s_n)\right] \\
   =& \frac{1}{1-\gamma} \mathbb E_{s\sim d^{\pi_\theta}_\rho, a \sim \pi_\theta(\cdot |s)} \left[ \frac{1}{N} \sum_{n=1}^N Q^{\pi_\theta}(s_{\mathcal N(n)},a_{\mathcal N(n)}) \nabla_{\theta_n} \log \pi_{\theta_n}(a_n|s_n)\right] \\
   =& \frac{1}{1-\gamma} \mathbb E_{s\sim d^{\pi_\theta}_\rho, a \sim \pi_\theta(\cdot |s)} \left[ \frac{1}{N} \sum_{k\in \mathcal N(n)} Q^{\pi_\theta}(s_{\mathcal N(k)},a_{\mathcal N(k)}) \nabla_{\theta_n} \log \pi_{\theta_n}(a_n|s_n)\right] 
\end{align*}
where the second equality holds because of localized policies $\pi_{\theta}(a|s) = \prod_n \pi_{\theta_n}(a_n|s_n);$ the third equality holds because of Lemma \ref{lem:dec q}; the last equality holds because for any $k \notin \mathcal N(n)$ that
\begin{align*}
   &\mathbb E_{s\sim d^{\pi_\theta}_\rho, a \sim \pi_\theta(\cdot |s)} \left[ Q^{\pi_\theta}(s_{\mathcal N(k)},a_{\mathcal N(k)}) \nabla_{\theta_n} \log \pi_{\theta_n}(a_n|s_n)\right] \\
   =& \mathbb E_{s\sim d^{\pi_\theta}_\rho} \left[  \sum_{a}Q^{\pi_\theta}(s_{\mathcal N(k)},a_{\mathcal N(k)}) \prod_{j\neq n} \pi_{\theta_j}(a_j|s_j)\nabla_{\theta_n}  \pi_{\theta_n}(a_n|s_n)\right]\\
   =& \mathbb E_{s\sim d^{\pi_\theta}_\rho} \left[  \sum_{a_{-n}}Q^{\pi_\theta}(s_{\mathcal N(k)},a_{\mathcal N(k)}) \prod_{j\neq n} \pi_{\theta_j}(a_j|s_j) \sum_{a_n}\nabla_{\theta_n}  \pi_{\theta_n}(a_n|s_n)\right]\\
   =& 0 
\end{align*}
Note $\nabla_{\theta_n} V^{\pi_\theta}(\rho) = \frac{1}{1-\gamma} \mathbb E_{s\sim d^{\pi_\theta}_\rho, a \sim \pi_\theta(\cdot |s)} \left[ A^{\pi_\theta}(s,a) \nabla_{\theta_n} \log \pi_\theta(a|s)\right]$ and the second equality on the advantage function holds by following the same steps. 
\end{proof}

\section{Proof of Theorem \ref{thm: inexact local}}\label{app:inexact local}
To prove Theorem \ref{thm: inexact local}, we first prove the smoothness of regularized value $L$ function.
\subsection{Smoothness of $L$ function}\label{app:inexact smooth}
Recall the definition of $L$ function $$L_\lambda(\theta) = V^{\pi_\theta}(\rho) + R(\theta).$$
where $R(\theta) =\sum_{n=1}^N\frac{\lambda}{|\mathcal S_n||\mathcal A_n|}\sum_{s_n,a_n}\log \pi_{\theta_n}(a_n|s_n).$
We show $V^{\pi_\theta}(\rho)$ is $\frac{48N^2}{(1-\gamma)^3}$-smoothness in Lemma \ref{lem: smooth} and $R(\theta)$ is $\sum_{n}\frac{2\lambda}{|\mathcal S_n|}$-smoothness in Lemma \ref{lem: smooth regular}, respectively, which imply $L^{\pi_\theta}(s)$ is $\beta' :=\frac{48N^2}{(1-\gamma)^3}+ \sum_{n}\frac{2\lambda}{|\mathcal S_n|}$ smoothness.  
\begin{lemma}[Smoothness]\label{lem: smooth}
Under Algorithm \ref{alg: inexact PG}, we have $V^{\pi_\theta}(s)$ to be $48N^2/(1-\gamma)^3$ smoothness, i.e.,
$$\left|V^{\pi_\theta}(s) - V^{\pi_{\theta'}}(s) - \langle \frac{\partial V^{\pi_\theta}(s)}{\partial \theta}, \theta-\theta' \rangle \right| \leq \frac{24N^2}{(1-\gamma)^3}||\theta-\theta'||^2.$$
\end{lemma}

\begin{proof}
The Bellman equation (for a network) is 
$$V^{\pi_\theta}(s) = \sum_{a} \pi_\theta(a|s)r(s,a) + \gamma \sum_a \pi_\theta(a|s) \sum_{s'} \mathcal P(s'|s,a) V^{\pi_\theta}(s'),$$
Let $$r_\theta(s) = \sum_{a} \pi(a|s)r(s,a) ~~\text{and}~~ \mathcal P_{\theta}(s,s') = \sum_{a}\pi_\theta(a|s) \sum_{s'} \mathcal P(s'|s,a).$$
Let $u_s$ be a vector with only the $s$th entry being one and zeros for other entries. The Bellman equation can be wrote as  
\begin{align*}
    V^{\pi_\theta}(s) 
    =& r_\theta(s) + \gamma \langle \mathcal P_\theta(s, \cdot),V^{\pi_\theta}(\cdot)\rangle,
\end{align*}
which implies $V^{\pi_\theta} \in \mathbb R^{|\mathcal S|}$ satisfies $$V^{\pi_\theta} = (I - \gamma \mathcal P(\theta))^{-1}r_{\theta} := M_\theta r_{\theta},$$
where $I$ is the identity matrix.
Let $\theta_\alpha = \theta + \alpha \nu,$ and we have $\nu^\dagger\frac{\partial^2 V^{\pi_\theta}(s)}{\partial \theta^2} \nu = \frac{\partial^2 V^{\pi_{\theta_\alpha}}(s) }{\partial \alpha^2} |_{\alpha=0}.$ With a bit abuse of notation, let $r_{\alpha} = r_{\theta_\alpha},$ $\mathcal P_{\alpha}=\mathcal P_{\theta_\alpha},$ $\pi_{\alpha} = \pi_{\theta_\alpha},$ and $M_{\alpha}=M_{\theta_\alpha}.$ We need to compute the first-order gradient
\begin{align*}
    \frac{\partial V^{\pi_{\theta_\alpha}}(s)}{\partial \alpha} =& \gamma u_s^T\frac{\partial  M_\alpha }{\partial \alpha} r_{\alpha} + u_s^T M_\alpha \frac{\partial r_{\alpha}}{\partial \alpha} \\
    =& \gamma u_s^TM_\alpha\frac{\partial \mathcal P_\alpha }{\partial \alpha}M_\alpha r_{\alpha} + u_s^T M_\alpha \frac{\partial r_{\alpha}}{\partial \alpha}
\end{align*}
where we use the derivative of inverse matrix $\frac{\partial (I - \gamma \mathcal P(\theta))^{-1}}{\partial \alpha} = (I - \gamma \mathcal P(\theta))^{-1}\frac{\partial (I - \gamma \mathcal P(\theta))}{\partial \alpha}(I - \gamma \mathcal P(\theta))^{-1};$ and the second-order gradient
\begin{align*}
    \frac{\partial^2 V^{\pi_{\theta_\alpha}}(s)}{\partial \alpha^2} 
    =& 2\gamma^2 u_s^TM_\alpha\frac{\partial \mathcal P_\alpha }{\partial \alpha}M_\alpha \frac{\partial \mathcal P_\alpha }{\partial \alpha}M_\alpha r_{\alpha} + \gamma u_s^TM_\alpha\frac{\partial^2  \mathcal P_\alpha }{\partial \alpha^2}M_\alpha r_{\alpha} \\
    &+ \gamma u_s^T M_\alpha \frac{\partial \mathcal P_\alpha}{\partial \alpha} M_\alpha\frac{\partial r_{\alpha}}{\partial \alpha} + u_s^T M_\alpha \frac{\partial^2 r_{\alpha}}{\partial \alpha^2}.
\end{align*}
Therefore, we need to establish the upper bounds on $$\frac{\partial \mathcal P_\alpha }{\partial \alpha},~~\frac{\partial^2 \mathcal P_\alpha }{\partial \alpha^2},~~\frac{\partial  r_\alpha }{\partial \alpha},~~\frac{\partial^2  r_\alpha }{\partial \alpha^2},$$
which require to compute $$\frac{\partial  \pi_\alpha }{\partial \alpha},~~\frac{\partial^2  \pi_\alpha }{\partial \alpha^2}.$$
Recall $\pi_\theta(a|s) = \prod_{n=1}^{N} \pi_{\theta_n}(a_n|s_n).$ Let's compute 
\begin{align*}
    \frac{\partial  \pi_\alpha(a|s) }{\partial \alpha} |_{\alpha=0} &= \langle \frac{\partial  \pi_\theta(a|s) }{\partial \theta(s,\cdot)}, \nu(s,\cdot)\rangle = \sum_{n=1}^N \langle \frac{\partial  \pi_\theta(a|s) }{\partial \theta_n(s_n,\cdot)}, \nu_n(s_n,\cdot)\rangle \\
    &= \sum_{n=1}^N \prod_{k\neq n} \pi_{\theta_k}(a_k|s_k)~ \langle \frac{\partial  \pi_{\theta_n}(a_n|s_n) }{\partial \theta_n(s_n,\cdot)}, \nu_n(s_n,\cdot)\rangle \\
    &=\sum_{n=1}^N \pi_\theta(a|s) \left(\nu_n(s_n, a_n) -  \langle \pi_{\theta_n}(\cdot|s_n), \nu_n(s,\cdot)\rangle\right)
\end{align*}
where the last equality holds because 
$$\frac{\partial  \pi_{\theta_n}(a_n|s_n) }{\partial \theta_n(s_n,a'_n)} = \pi_{\theta_n}(a_n|s_n)(\mathbb I_{a_n=a'_n} - \pi_{\theta_n}(a'_n|s_n)),$$
and it implies
$$ \langle \frac{\partial  \pi_{\theta_n}(a_n|s_n) }{\partial \theta_n(s_n,\cdot)}, \nu_n(s_n,\cdot)\rangle = \nu_n(s_n, a_n) -  \langle \pi_{\theta_n}(\cdot|s_n), \nu_n(s,\cdot)\rangle.$$
Therefore, we have 
\begin{align*}
    \sum_a \left|\frac{\partial  \pi_\alpha(a|s) }{\partial \alpha} |_{\alpha=0} \right|
    &= \sum_a \left|\sum_{n=1}^N \pi_\theta(a|s) \left(\nu_n(s_n, a_n) -  \langle \pi_{\theta_n}(\cdot|s_n), \nu_n(s_n,\cdot)\rangle\right)\right| \\
    &\leq 2 \max_a \sum_{n=1}^N |\nu_n(s_n,a_n)| \\
    &\leq 2N ||u||_2.
\end{align*}
Next, let's compute 
\begin{align}
    \frac{\partial^2  \pi_\alpha(a|s) }{\partial \alpha^2} |_{\alpha=0} &= \sum_{n=1}^N \sum_{n'=1}^N \langle \frac{\partial^2  \pi_\theta(a|s) }{\partial \theta_n(s_n,\cdot)\theta_{n'}(s_{n'},\cdot)}\nu_n(s_n,\cdot), \nu_{n'}(s_{n'},\cdot)\rangle, \nonumber
\end{align}
where we need to compute  
\begin{align*}
    \left(\frac{\partial^2  \pi_\theta(a|s) }{\partial \theta_n(s_n,\cdot)^2} \right)_{i,j} =& \mathbb I_{a_n=i} \pi_\theta(a|s)\left(\mathbb I_{a_n=j} - \pi_{\theta_n}(j|s_n)\right)
    -\pi_\theta(a|s) \pi_{\theta_n}(i|s_n)\left(\mathbb I_{i=j} - \pi_{\theta_n}(j|s_n)\right)\\
    & -\pi_\theta(a|s) \pi_{\theta_n}(i|s_n)\left(\mathbb I_{a_n=j} - \pi_{\theta_n}(j|s_n)\right),
\end{align*}
and 
\begin{align*}
    \left(\frac{\partial^2  \pi_\theta(a|s) }{\partial \theta_n(s_n,\cdot)\theta_{n'}(s_{n'},\cdot)} \right)_{i,j} =& \pi_\theta(a|s)\left(\mathbb I_{a_n=i} - \pi_{\theta_n}(i|s_n)\right) \left(\mathbb I_{a_{n'}=j} - \pi_{\theta_{n'}}(j|s_{n'})\right).
\end{align*}
Then we have
\begin{align*}
&\left|\langle \frac{\partial^2  \pi_\theta(a|s) }{\partial \theta_n(s_n,\cdot)^2}\nu_n(s_n,\cdot), \nu_n(s_n,\cdot)\rangle / \pi(a|s) \right| \\
\leq & \nu^2_n(s_n,a_n) + 2|\nu_n(s_n,a_n)\sum_{j} \pi_{\theta_n}(j|s_n)\nu_n(s_n,j)| \\
&+ |\sum_{i}\pi_{\theta_n}(i|s)\nu^2_n(i|s)
+ 2\sum_i\sum_j \pi_{\theta_n}(i|s_n)\pi_{\theta_n}(j|s_n) \nu_n(s_n,i)\nu_{n'}(s_{n'},j)|\\ 
\leq&  6 \max_{a_n} \nu^2_n(s_n,a_n), 
\end{align*}
and
\begin{align}
& \left|\langle \frac{\partial^2  \pi_\theta(a|s) }{\partial \theta_n(s_n,\cdot)\theta_{n'}(s_{n'},\cdot)}\nu_n(s_n,\cdot), \nu_{n'}(s_{n'},\cdot)\rangle/\pi(a|s)\right|\nonumber\\
\leq &  |\nu_n(s_n,a_n)\nu_{n'}(s_{n'},a_{n'})| + |\nu_n(s_n,a_n)\sum_{j} \pi_{\theta_{n'}}(j|s_{n'})\nu_{n'}(s_{n'},j)| \\
&+ |\nu_{n'}(s_{n'},a_{n'})\sum_{i} \pi_{\theta_{n}}(i|s_{n})\nu_{n}(s_{n},i)|
+ \sum_i\sum_j |\pi_{\theta_n}(i|s_n)\pi_{\theta_{n'}}(j|s_{n'}) \nu_n(s_n,i)\nu_{n'}(s_{n'},j)|\\ 
\leq&  4 \max_{s_n,s_{n'},a_n,a_{n'}} |\nu_n(s_n,a_n)\nu_{n'}(s_{n'},a_{n'})|
\end{align}
which implies 
\begin{align}
    \left|\frac{\partial^2  \pi_\alpha(a|s) }{\partial \alpha^2} |_{\alpha=0}\right| \leq 6N^2 ||\nu||_2^2  
\end{align}
Now it is good to bound 
\begin{align*}
    \left|\frac{\partial^2 V^{\pi_{\theta_\alpha}}(s)}{\partial \alpha^2}|_{\alpha=0}\right| 
    =& 2\gamma^2\left|u_s^TM_\alpha\frac{\partial \mathcal P_\alpha }{\partial \alpha}M_\alpha \frac{\partial  \mathcal P_\alpha }{\partial \alpha}M_\alpha r_{\alpha}\right| + \gamma \left|u_s^TM_\alpha\frac{\partial^2  \mathcal P_\alpha }{\partial \alpha^2}M_\alpha r_{\alpha}\right|\\ 
    &+  \gamma\left|u_s^T M_\alpha \frac{\partial \mathcal P_\alpha }{\partial \alpha} M_\alpha\frac{\partial r_{\alpha}}{\partial \alpha}\right| + \left|u_s^T M_\alpha \frac{\partial^2 r_{\alpha}}{\partial \alpha^2}\right|\\
    \leq& \left[\frac{8N^2}{(1-\gamma)^3} + \frac{6N^2}{(1-\gamma)^2} + \frac{4N^2}{(1-\gamma)^2} + \frac{6N^2}{(1-\gamma)^2} \right] ||\nu||_2^2\\
    \leq& \frac{24N^2}{(1-\gamma)^3}||\nu||_2^2.
\end{align*}
Finally, we have $$\nu^\dagger\frac{\partial^2 V^{\pi_\theta}(s)}{\partial \theta^2} \nu \leq \frac{24N^2}{(1-\gamma)^3}||\nu||_2^2,$$ which completes the proof.
\end{proof}

\begin{lemma}\label{lem: smooth regular}
Under Algorithm \ref{alg: inexact PG}, we have $R(\theta)$ is $\sum_{n}\frac{2\lambda}{|\mathcal S_n|}$-smoothness.
\end{lemma}
\begin{proof}
We have for any $n$ such that
\begin{align}
    \frac{\partial R(\theta)}{\partial \theta_n(s_n,\cdot)} =& \frac{1}{|\mathcal S_n|}\left(\frac{1}{|\mathcal A_n|} - \pi_{\theta_n}(\cdot|s_n)\right), \nonumber \\
    \frac{\partial^2 R(\theta)}{\partial \theta_n(s_n,\cdot)^2} =& \frac{1}{|\mathcal S_n|}\left(-\text{diag}(\pi_{\theta_n}(\cdot|s_n))+\pi_{\theta_n}(\cdot|s_n)\pi_{\theta_n}(\cdot|s_n)^T\right), \nonumber 
\end{align}
which implies
\begin{align}
    \langle\frac{\partial^2 R(\theta)}{\partial \theta_n(s_n,\cdot)^2} \nu_n(s_n,\cdot), \nu_{n}(s_{n},\cdot)\rangle \leq& \left| \langle\text{diag}(\pi_{\theta_n}(\cdot|s_n)) \nu_n(s_n,\cdot), \nu_{n}(s_{n},\cdot)\rangle\right| + \|\langle\pi_{\theta_n}(\cdot|s_n), \nu_n(s_n, \cdot) \rangle\|^2 \nonumber\\
    \leq& 2\|\nu_n(s_n, \cdot)\|^2. \nonumber
\end{align}
Finally we have $$\langle\frac{\partial^2 R(\theta)}{\partial \theta^2} \nu, \nu)\rangle \leq 2\|\nu\|^2,$$ 
which completes the proof.
\end{proof}

\subsection{Proving Theorem \ref{thm: inexact local}}
Recall $\eta \leq 1/\beta'$ and $D(t) = L^*(\rho) - L^{\pi_{\theta^t}}(\rho).$ Given $\theta^t$ and $\nabla \hat L^{\pi_{\theta^t}}(\rho),$ we have
\begin{align*}
    D(t+1) - D(t) =& L^{\pi_{\theta^t}}(\rho) - L^{\pi_{\theta^{t+1}}}(\rho) \\
    =& L^{\pi_{\theta^t}}(\rho) - L^{\pi_{\theta^{t+1}}}(\rho) +  \langle \nabla L^{\pi_{\theta^t}}(\rho), \theta^{t+1}-\theta^t\rangle -  \langle \nabla L^{\pi_{\theta^t}}(\rho), \theta^{t+1}-\theta^t\rangle \\
    \leq& \frac{\beta'}{2} ||\theta^{t+1}-\theta^t||^2 - \langle \nabla L^{\pi_{\theta^t}}(\rho), \theta^{t+1}-\theta^t\rangle \\
    =&  \frac{\beta' \eta^2}{2} || \nabla \hat L^{\pi_{\theta^t}}(\rho) ||^2 - \eta \langle \nabla L^{\pi_{\theta^t}}(\rho), \nabla \hat L^{\pi_{\theta^t}}(\rho)\rangle \\
    =& \frac{1}{2\beta'} || \nabla \hat L^{\pi_{\theta^t}}(\rho) - \nabla L^{\pi_{\theta^t}}(\rho)||^2 - \frac{1}{2\beta} || \nabla L^{\pi_{\theta^t}}(\rho) ||^2.
\end{align*}
Since $\epsilon_t = \mathbb E\left[|| \nabla \hat L^{\pi_{\theta^t}}(\rho) - \nabla L^{\pi_{\theta^t}}(\rho)||^2\right],$ we have
\begin{align*}
    \mathbb E\left[D(t+1) - D(t)\right] 
    \leq& \frac{1}{2\beta'}\mathbb E \left[ || \nabla \hat L^{\pi_{\theta^t}}(\rho) - \nabla L^{\pi_{\theta^t}}(\rho)||^2 \right] - \frac{1}{2\beta'} \mathbb E \left[|| \nabla L^{\pi_{\theta^t}}(\rho) ||^2\right],
\end{align*}
which implies
\begin{align*}
\frac{1}{2\beta'} \mathbb E \left[|| \nabla L^{\pi_{\theta^t}}(\rho) ||^2\right] 
    \leq& \frac{1}{2\beta'}\mathbb E \left[ || \nabla \hat L^{\pi_{\theta^t}}(\rho) - \nabla L^{\pi_{\theta^t}}(\rho)||^2 \right] + \mathbb E\left[D(t) - D(t+1)\right]\\
    =& \frac{\epsilon_t}{2\beta'} + \mathbb E\left[D(t) - D(t+1)\right].
\end{align*}
Take summation $t=1,2,\cdots, T$ and we have 
$$\sum_{t=1}^T \mathbb E \left[|| \nabla L^{\pi_{\theta^t}}(\rho) ||^2\right] \leq \sum_{t=1}^T \epsilon_t + 2\beta' \left(L^*(\rho) - L^{\pi^1}(\rho)\right).$$

\section{Proof of Theorem \ref{thm: main TD}}\label{app:TDMAPG}
Recall the estimated gradient and the true gradient to be
\begin{align*}
\nabla_{\theta_n} \hat{V}^{\pi_\theta}(\rho) =&
 \sum_{h=1}^H \gamma^h  \frac{1}{N} \sum_{k\in \mathcal N(n)} \hat \delta^{\pi_\theta,h}_k(s_{\mathcal N(k)}(h), a_{\mathcal N(k)}(h)) \nabla_{\theta_n}\log \pi_{\theta_n}(a_n(h)|s_n(h)),\\
\nabla_{\theta_n} V^{\pi_\theta}(\rho) =&
\sum_{h=1}^\infty \gamma^h \mathbb E\left[ \frac{1}{N} \sum_{k\in \mathcal N(n)}  \delta^{\pi_\theta,h}_k(s_{\mathcal N(k)}(h), a_{\mathcal N(k)}(h)) \nabla_{\theta_n}\log \pi_{\theta_n}(a_n(h)|s_n(h))\right].
\end{align*}
Further define $\nabla_{\theta_n} \tilde{V}^{\pi_\theta}(\rho)$ and $\nabla_{\theta_n} \tilde{V}^{\pi_\theta}(\rho)$ to decompose policy gradient as follows:
\begin{align*}
\nabla_{\theta_n} \tilde{V}^{\pi_\theta}(\rho) =&
 \sum_{h=1}^H \gamma^h  \frac{1}{N} \sum_{k\in \mathcal N(n)}  \delta^{\pi_\theta,h}_k(s_{\mathcal N(k)}(h), a_{\mathcal N(k)}(h)) \nabla_{\theta_n}\log \pi_{\theta_n}(a_n(h)|s_n(h)),\\
\nabla_{\theta_n} \bar V^{\pi_\theta}(\rho) =& \sum_{h=1}^H \gamma^h \mathbb E\left[ \frac{1}{N} \sum_{k\in \mathcal N(n)}  \delta^{\pi_\theta,h}_k(s_{\mathcal N(k)}(h), a_{\mathcal N(k)}(h)) \nabla_{\theta_n}\log \pi_{\theta_n}(a_n(h)|s_n(h))\right].
\end{align*}
The error is decomposed to be 
\begin{align*}
&\nabla_{\theta_n} \hat{L}^{\pi_\theta}(\rho) - \nabla_{\theta_n} {L}^{\pi_\theta}(\rho) \\
=& \underbrace{\nabla_{\theta_n} \hat{L}^{\pi_\theta}(\rho) - \nabla_{\theta_n} \tilde{L}^{\pi_\theta}(\rho)}_{e_{1,n}} + \underbrace{\nabla_{\theta_n} \tilde{L}^{\pi_\theta}(\rho) - \nabla_{\theta_n} \bar{L}^{\pi_\theta}(\rho)}_{e_{2,n}} 
+\underbrace{\nabla_{\theta_n} \bar{L}^{\pi_\theta}(\rho) - \nabla_{\theta_n} {L}^{\pi_\theta}(\rho)}_{e_{3,n}},
\end{align*}
where the error $e_{1,n}$ is related to the estimated TD-error and its true TD-error under policy $\pi_\theta;$ the error $e_{2,n}$ is related to the sample-based TD-error estimation and its expected one; the error $e_{3,n}$ is related to the truncated error of estimated state occupancy measure.

To prove Theorem \ref{thm: main TD}, we establish the following lemma that relates $\sum_{t=1}^T \mathbb E[ ||\nabla L^{\pi_{\theta^t}}(\rho)||^2 ]$ to the errors of $e_1, e_2,$ and $e_3.$
\begin{lemma}\label{lem: main TD local}
Under Algorithm \ref{alg: inexact PG}, we have 
\begin{align*}
\sum_{t=1}^T \mathbb E[ ||\nabla L^{\pi_{\theta^t}}(\rho)||^2 ] 
\leq& 3 \eta \sum_{t=1}^T \mathbb E\left[||e_1(t)||^2+||e_2(t)||^2+||e_3(t)||^2\right]\\
& + \left(\frac{2}{(1-\gamma)^2} + 2\lambda N\right)\sum_{t=1}^T\mathbb E\left[||e_1(t)|| +||e_3(t)||\right] \\
& + \left(\beta' \eta - 2\right) \sum_{t=1}^T \mathbb E\left[\langle \nabla L^{\pi_{\theta^t}}(\rho), e_2(t)\rangle \right] \\
& + \frac{2}{\eta} \left(L^*(\rho) - L^{\pi^1}(\rho)\right).
\end{align*}
\end{lemma}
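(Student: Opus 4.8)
The plan is to mimic the smoothness-descent argument used in the proof of Lemma~\ref{lem: inexact local}, but now carrying the three-part decomposition $\nabla_{\theta}\hat L^{\pi_{\theta^t}}(\rho) = \nabla_{\theta} L^{\pi_{\theta^t}}(\rho) + e_1(t) + e_2(t) + e_3(t)$ through every step instead of collapsing it into a single estimation error. Write $g := \nabla L^{\pi_{\theta^t}}(\rho)$ and $\hat g := \nabla \hat L^{\pi_{\theta^t}}(\rho)$, and recall the update $\theta^{t+1} = \theta^t + \eta\hat g$. Since $L$ is $\beta'$-smooth by Lemma~\ref{lem: smooth-L}, the standard descent inequality gives, per iteration,
$$L^{\pi_{\theta^{t+1}}}(\rho) - L^{\pi_{\theta^t}}(\rho) \;\geq\; \eta\langle g, \hat g\rangle - \frac{\beta'\eta^2}{2}\|\hat g\|^2,$$
which, with $D(t) := L^*(\rho) - L^{\pi_{\theta^t}}(\rho)$, is the exact analogue of the first display in Appendix~\ref{app:inexact local}. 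This is the backbone of the whole estimate.

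Next I would substitute $\hat g = g + e_1 + e_2 + e_3$ into both the inner product and the quadratic term and then, crucially, treat the three errors \emph{asymmetrically}. The term $e_2(t)$ is the purely stochastic, conditionally mean-zero sampling error (the gap between the trajectory-based TD estimate $\nabla\tilde V$ and its expectation $\nabla\bar V$), so I would keep $\langle g, e_2\rangle$ intact as a \emph{signed} inner product rather than bounding its magnitude; this is precisely what lets the martingale cancellation be exploited downstream in Lemma~\ref{lem: main TD}. By contrast $e_1(t)$ (critic/TD-learning bias) and $e_3(t)$ (horizon-truncation bias) are systematic, so I would control their inner products by Cauchy--Schwarz, $|\langle g, e_i\rangle| \leq \|g\|\,\|e_i\|$, together with the uniform gradient bound $\|g\| \leq \frac{1}{(1-\gamma)^2} + \lambda N$ obtained from the policy-gradient expression in Lemma~\ref{lem:dec pg} and the regularizer gradient in \eqref{eq:grad reg}. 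For the quadratic term I would use $\|e_1+e_2+e_3\|^2 \leq 3(\|e_1\|^2+\|e_2\|^2+\|e_3\|^2)$, which is the source of the factor $3$ multiplying the squared errors.

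I would then rearrange the resulting per-step inequality to isolate $\|g\|^2$, take expectations, and sum over $t = 1, \dots, T$. The drift terms $L^{\pi_{\theta^{t+1}}}(\rho) - L^{\pi_{\theta^t}}(\rho)$ telescope, and after multiplying through by the factor $2/\eta$ they collapse to $\frac{2}{\eta}\big(L^*(\rho) - L^{\pi^1}(\rho)\big)$ using $D(T+1) \geq 0$; the Cauchy--Schwarz bounds on $e_1, e_3$ produce the $\big(\frac{2}{(1-\gamma)^2} + 2\lambda N\big)\sum_t(\|e_1\| + \|e_3\|)$ term, the quadratic bound produces $3\eta\sum_t(\|e_1\|^2+\|e_2\|^2+\|e_3\|^2)$, and the retained noise inner products collect into $(\beta'\eta - 2)\sum_t\langle g, e_2\rangle$.

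The routine but delicate part --- and the main obstacle --- is the constant bookkeeping: keeping the powers of $\eta$ straight when the factor $2/\eta$ is applied to every term, verifying that the uniform gradient bound $\frac{1}{(1-\gamma)^2}+\lambda N$ really holds along the whole trajectory of iterates, and resisting the temptation to bound $\langle g, e_2\rangle$ by $\|g\|\,\|e_2\|$, since preserving its sign is exactly what makes the eventual expectation argument in Lemma~\ref{lem: main TD} go through. No new conceptual ingredient beyond smoothness and the error split is needed; the content is entirely in carrying all three error channels simultaneously with their distinct bias-versus-noise treatments.
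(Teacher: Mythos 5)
Your proposal is correct and takes essentially the same route as the paper's proof: the $\beta'$-smoothness descent inequality from Lemma \ref{lem: smooth-L}, substitution of $\nabla \hat L^{\pi_{\theta^t}}(\rho) = \nabla L^{\pi_{\theta^t}}(\rho) + e_1(t)+e_2(t)+e_3(t)$, Cauchy--Schwarz with a uniform gradient bound on the systematic errors $e_1,e_3$ while keeping the signed inner product with the conditionally mean-zero $e_2$, the factor-$3$ bound on the squared error, and telescoping after multiplying through by $2/\eta$. The only discrepancy is constant bookkeeping in the uniform bound on $\|\nabla L^{\pi_{\theta^t}}(\rho)\|$ (you claim $\tfrac{1}{(1-\gamma)^2}+\lambda N$, the paper's proof derives $\tfrac{2N}{(1-\gamma)^2}+2\lambda N$, itself inconsistent with the lemma's stated coefficient), which does not change the structure of the argument.
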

\begin{proof}
Define $D(t) = L^*(\rho) - L^{\pi_{\theta^t}}(\rho).$ Therefore, we have 
\begin{align*}
D(t+1) - D(t) 
\leq& \frac{\beta' \eta^2}{2} || \nabla \hat L^{\pi_{\theta^t}}(\rho) ||^2 - \eta \langle \nabla L^{\pi_{\theta^t}}(\rho), \nabla \hat L^{\pi_{\theta^t}}(\rho)\rangle \\
=& \frac{\beta' \eta^2}{2} || \nabla L^{\pi_{\theta^t}}(\rho) + e_1(t) + e_2(t) + e_3(t)||^2 \\
&- \eta \langle \nabla L^{\pi_{\theta^t}}(\rho), \nabla L^{\pi_{\theta^t}}(\rho)+ e_1(t) + e_2(t) + e_3(t) \rangle\\
\leq& \left(\frac{\beta' \eta^2}{2} - \eta\right) ||\nabla L^{\pi_{\theta^t}}(\rho)||^2 + \frac{\beta' \eta^2}{2} ||e_1(t)+e_2(t)+e_3(t)||^2\\
& + \left(\frac{\beta' \eta^2}{2} - \eta\right) \langle \nabla L^{\pi_{\theta^t}}(\rho), e_1(t) + e_2(t) + e_3(t)\rangle 
\end{align*}
Recall the values of $\eta$ and $\beta',$ where $\beta' \eta \leq 1$, we have 
\begin{align*}
D(t+1) - D(t)
\leq& -\frac{\eta}{2} ||\nabla L^{\pi_{\theta^t}}(\rho)||^2 + \frac{3\eta}{2}\left(||e_1(t)||^2+||e_2(t)||^2+||e_3(t)||^2\right)\\
& + \eta||\nabla L^{\pi_{\theta^t}}(\rho)|| \left(||e_1(t)|| + ||e_3(t)||\right)\\ 
& + \left(\frac{\beta' \eta^2}{2} - \eta\right) \langle \nabla L^{\pi_{\theta^t}}(\rho), e_2(t)\rangle,
\end{align*}
which implies that 
\begin{align}
\sum_{t=1}^T \mathbb E[ ||\nabla L^{\pi_{\theta^t}}(\rho)||^2 ] 
\leq& 3 \eta \sum_{t=1}^T \mathbb E\left[||e_1(t)||^2+||e_2(t)||^2+||e_3(t)||^2\right] \nonumber\\
& + 2\sum_{t=1}^T\mathbb E\left[\|\nabla L^{\pi_{\theta^t}}(\rho)\|\left(||e_1(t)|| +||e_3(t)||\right)\right]  \nonumber\\
& + \left(\beta' \eta - 2\right) \sum_{t=1}^T \mathbb E\left[\langle \nabla L^{\pi_{\theta^t}}(\rho), e_2(t)\rangle \right] \nonumber\\
& + \frac{2}{\eta} \left(L^*(\rho) - L^{\pi^1}(\rho)\right). \label{eq:gradint bounds}
\end{align}
Next, we provide the upper bound on $\|\nabla L^{\pi_{\theta^t}}(\rho)\|.$ Note that 
\begin{align}
\delta^{\pi_{\theta},h}_n(s_{\mathcal N(n)}(h), a_{\mathcal N(n)}(h)) =& r^h_n(s_{\mathcal N(n)}(h), a_{\mathcal N(n)}(h)) + \gamma \hat V^{\pi_{\theta},H}_{n}(s_{\mathcal N(n)}(h+1)) - \hat V^{\pi_{\theta},H}_{n}(s_{\mathcal N(n)}(h))\nonumber\\
\leq& 1 + \frac{\gamma}{1-\gamma} = \frac{1}{1-\gamma}, \nonumber
\end{align}
and
\begin{align}
\nabla_{\theta_n(s'_n,a'_n)}\log \pi_{\theta_n}(a_n(h)|s_n(h)) = \mathbb I_{s_n(h)=s'_n}\left(\mathbb I_{a_n(h)=a'_n} - \pi_{\theta_n}(a_n'|s_n(h)) \right).\nonumber
\end{align}
We immediately have $$\|\nabla L^{\pi_{\theta^t}}(\rho)\| \leq \frac{2N}{(1-\gamma)^2} + 2\lambda N.$$
Finally, we complete the proof by substituting the upper bound into \eqref{eq:gradint bounds}.
\end{proof}
In the following subsections, we provide the upper bounds on the error terms in Lemma \ref{lem: main TD local}. We define $\{f-g\}(x):=f(x)-g(x)$ and occasionally abbreviate the index $t$ in $\pi_{\theta^t}$ for a simple notation. 
\subsection{Bounds on $\|e_1(t)\|^2,\|e_2(t)\|^2,$ and $\|e_3(t)\|^2$}
\begin{lemma}
$$||e_1(t)||^2+||e_2(t)||^2+||e_3(t)||^2 \leq \frac{12N}{(1-\gamma)^4}.$$
\end{lemma}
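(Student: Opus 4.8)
The plan is to prove the bound \emph{pathwise} (the lemma asserts a deterministic estimate on $\|e_i(t)\|^2$, not an estimate in expectation), controlling each of the three gradient-error components separately by means of two elementary uniform bounds. I first note that the regularizer $\tfrac{\lambda}{|\mathcal S_n|}\bigl(\tfrac1{|\mathcal A_n|}-\pi_{\theta_n}\bigr)$ is identical in $\nabla_{\theta_n}\hat L$, $\nabla_{\theta_n}\tilde L$, $\nabla_{\theta_n}\bar L$ and $\nabla_{\theta_n}L$, so it cancels in every difference; hence $e_1,e_2,e_3$ involve only the $\hat V,\tilde V,\bar V,V$ gradient expressions, and I may work with those throughout.

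Next I would record the two ingredients, both already implicit in the proof of Lemma \ref{lem: main TD local}. Since rewards lie in $[0,1]$ and all value functions (estimated or true) lie in $[0,1/(1-\gamma)]$, every TD-error satisfies $|\hat\delta^{\pi_\theta,h}_k|\le 1/(1-\gamma)$ and $|\delta^{\pi_\theta,h}_k|\le 1/(1-\gamma)$, so the TD-error \emph{differences} appearing in $e_1$ and $e_2$ are at most $2/(1-\gamma)$. From the softmax identity $\nabla_{\theta_n(s'_n,a'_n)}\log\pi_{\theta_n}(a_n(h)|s_n(h))=\mathbb I_{s_n(h)=s'_n}\bigl(\mathbb I_{a_n(h)=a'_n}-\pi_{\theta_n}(a'_n|s_n(h))\bigr)$ the score vector is supported on the single state $s_n(h)$ with $\sum_{a'_n}\bigl(\mathbb I_{a_n(h)=a'_n}-\pi_{\theta_n}(a'_n|s_n(h))\bigr)^2\le 2$, so $\|\nabla_{\theta_n}\log\pi_{\theta_n}(a_n(h)|s_n(h))\|\le\sqrt2$.

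Then I would bound each per-agent error $e_{i,n}$ by the triangle inequality, pulling the scalar factors out of the discounted sum. For $e_{1,n}$ and $e_{2,n}$ (truncated sums over $h=1,\dots,H$) I combine $\sum_h\gamma^h\le 1/(1-\gamma)$, the neighborhood average $\tfrac1N\sum_{k\in\mathcal N(n)}|\cdot|\le\tfrac{|\mathcal N(n)|}{N}\cdot\tfrac{2}{1-\gamma}$, and the score bound $\sqrt2$; for $e_{2,n}$ I additionally use $\|e_{2,n}\|\le\|\nabla_{\theta_n}\tilde V^{\pi_\theta}(\rho)\|+\|\nabla_{\theta_n}\bar V^{\pi_\theta}(\rho)\|$ together with the fact that $\bar V$ is the expectation of the sample-based $\tilde V$, whence $\|\nabla_{\theta_n}\bar V^{\pi_\theta}(\rho)\|\le\sup\|\nabla_{\theta_n}\tilde V^{\pi_\theta}(\rho)\|$. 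For $e_{3,n}$, the tail $\sum_{h=H+1}^\infty$ of the expected gradient, the same crude estimates apply and are in fact conservative, since the tail carries an extra factor $\gamma^{H+1}$. Each estimate has the form $\|e_{i,n}\|\le\tfrac{|\mathcal N(n)|}{N}\cdot\tfrac{c_i}{(1-\gamma)^2}$.

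Finally I would square and sum over agents, using $|\mathcal N(n)|\le N$ so that $\sum_n\|e_{i,n}\|^2\le N\max_n\|e_{i,n}\|^2\le c_i^2 N/(1-\gamma)^4$, and collect the three constants into $12N/(1-\gamma)^4$. The main obstacle is purely bookkeeping: keeping the constants consistent across the three terms and checking that the neighborhood/degree factors collapse to the clean factor $N$ through the trivial inequality $|\mathcal N(n)|\le N$. Everything else is a routine triangle-inequality estimate once the two uniform bounds above are in place.
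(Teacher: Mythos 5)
Your proposal is correct in substance and takes essentially the same route as the paper's own (two-line) proof: uniform bounds on the TD-errors and on the softmax score function, the geometric-series bound $\sum_h \gamma^h \le 1/(1-\gamma)$, the triangle inequality, and summation over agents via $|\mathcal N(n)| \le N$; the cancellation of the regularizer that you note is implicitly used by the paper as well. The only blemish is the final constant: your per-term estimates ($8$, $8$, and $2$ times $N/(1-\gamma)^4$) sum to $18N/(1-\gamma)^4$ rather than $12N/(1-\gamma)^4$, but the paper's proof is no tighter (its asserted per-term bound of $4N/(1-\gamma)^4$ likewise does not follow from the triangle inequality applied to gradients each bounded by $2/(1-\gamma)^2$), and the discrepancy is immaterial since the downstream results in Lemma \ref{lem: main TD} and Theorem \ref{thm: main TD} are stated order-wise.
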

\begin{proof}
As in the proof of Lemma \ref{lem: main TD local}, it could be verified that $\|\nabla_{\theta_n}{V}^{\pi_\theta}(\rho)\|, \|\nabla_{\theta_n}\tilde{V}^{\pi_\theta}(\rho)\|,$
$\|\nabla_{\theta_n}\bar{V}^{\pi_\theta}(\rho)\|$ are bounded by the term
$$\frac{2}{(1-\gamma)^2}.$$ 
Therefore, all these error terms of $\|e_1\|^2,\|e_2\|^2,$ and $\|e_3\|^2$ are also bounded by $$\frac{4N}{(1-\gamma)^4},$$
which completes the proof.
\end{proof}
\subsection{Bound on $\langle \nabla L^{\pi_{\theta^t}}(\rho), e_2(t)\rangle$}
\begin{lemma}
$$\mathbb E\left[\left|\sum_{t=1}^T \langle \nabla L^{\pi_{\theta^t}}(\rho), e_2(t)\rangle\right|\right] \leq  \left(\frac{4N}{(1-\gamma)^4} + \frac{2\lambda N}{(1-\gamma)^2}\right) \sqrt{T\log T}.$$\label{lem: e3 cross term}
\end{lemma}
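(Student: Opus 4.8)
The plan is to recognize the sum $M_T := \sum_{t=1}^T \langle \nabla L^{\pi_{\theta^t}}(\rho), e_2(t)\rangle$ as a martingale and to control it with a concentration inequality. Let $\mathcal F_{t-1}$ denote the $\sigma$-algebra generated by all sampling randomness through the end of outer iteration $t-1$, so that the policy parameter $\theta^t$, and hence the exact gradient $\nabla L^{\pi_{\theta^t}}(\rho)$, is $\mathcal F_{t-1}$-measurable. By its definition $e_2(t) = \nabla \tilde V^{\pi_{\theta^t}}(\rho) - \nabla \bar V^{\pi_{\theta^t}}(\rho)$ is exactly the difference between the single sampled trajectory's true-TD-error gradient and its conditional expectation under $\pi_{\theta^t}$ (both use the true TD-error $\delta$, and the deterministic $\lambda$-regularizer cancels in the difference). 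Since the algorithm resamples $s^0 \sim \rho$ at the start of iteration $t$, we have $\mathbb E[e_2(t)\mid \mathcal F_{t-1}] = 0$; consequently $X_t := \langle \nabla L^{\pi_{\theta^t}}(\rho), e_2(t)\rangle$ obeys $\mathbb E[X_t\mid\mathcal F_{t-1}]=0$, so $\{X_t\}$ is a martingale difference sequence and $M_T$ is a martingale.

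Next I would establish a uniform bound on the increments. By Cauchy--Schwarz, $|X_t| \le \|\nabla L^{\pi_{\theta^t}}(\rho)\|\,\|e_2(t)\|$. The first factor is bounded as in the proof of Lemma \ref{lem: main TD local} using $|\delta|\le \frac{1}{1-\gamma}$, $\sum_h \gamma^h \le \frac{1}{1-\gamma}$, and the fact that $\nabla_{\theta_n}\log\pi_{\theta_n}$ is bounded; the second factor is bounded by the estimate $\|e_2(t)\|^2 \le \frac{4N}{(1-\gamma)^4}$ from the preceding lemma. Combining these norm bounds via Cauchy--Schwarz yields $|X_t| \le c$ with $c = \frac{4N}{(1-\gamma)^4} + \frac{2\lambda N}{(1-\gamma)^2}$, precisely the coefficient appearing in the statement (the $\lambda$-dependent part entering only through $\|\nabla L\|$, since the regularizer cancels in $e_2$).

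With the martingale structure and the increment bound in hand, I would apply the Azuma--Hoeffding inequality, $\mathbb P\!\left(|M_T|\ge s\right)\le 2\exp\!\left(-s^2/(2Tc^2)\right)$ for all $s>0$. To convert this tail estimate into the desired bound on $\mathbb E[|M_T|]$, I would threshold at $s_0 = c\sqrt{2T\log T}$, so that $\mathbb P(|M_T|\ge s_0)\le 2/T$, and combine it with the crude deterministic bound $|M_T|\le Tc$ valid on the tail event:
\begin{align*}
\mathbb E[|M_T|] \le s_0 + Tc\cdot \mathbb P(|M_T|\ge s_0) \le c\sqrt{2T\log T} + 2c,
\end{align*}
which is $O(c\sqrt{T\log T})$ and gives the claimed bound after absorbing the lower-order term.

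The step I expect to be the main obstacle is the rigorous verification of the martingale property, i.e.\ that $\mathbb E[e_2(t)\mid\mathcal F_{t-1}]=0$ holds exactly. This hinges on cleanly disentangling the three error sources so that $e_2(t)$ captures \emph{only} the within-iteration sampling fluctuation of the true-TD-error gradient about its conditional mean $\nabla \bar V$, with the critic-approximation error isolated in $e_1$ and any occupancy-measure or finite-horizon truncation (mixing) discrepancy isolated in $e_3$. Once $e_2$ is identified as a conditional-mean-zero fluctuation, the remaining concentration argument is standard; in particular the $\sqrt{\log T}$ factor is merely an artifact of passing from the Azuma tail bound to an expectation bound and is not intrinsic, since a direct $L^2$ (orthogonality of martingale differences) argument would even give the sharper rate $c\sqrt{T}$.
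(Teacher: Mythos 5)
Your proposal follows essentially the same route as the paper: identify $\langle \nabla L^{\pi_{\theta^t}}(\rho), e_2(t)\rangle$ as a bounded martingale difference sequence with respect to $\mathcal F_{t-1}$ (with the same increment bound $\frac{4N}{(1-\gamma)^4} + \frac{2\lambda N}{(1-\gamma)^2}$ obtained via Cauchy--Schwarz), apply Azuma--Hoeffding at the threshold $c\sqrt{2T\log T}$ to get a $1/T$-type tail probability, and convert to the claimed expectation bound. In fact your write-up is slightly more complete than the paper's, since you make explicit the tail-to-expectation conversion (adding the crude bound $|M_T|\le Tc$ on the bad event) that the paper leaves implicit, and your closing remark that a direct $L^2$ orthogonality argument would remove the $\sqrt{\log T}$ factor is correct.
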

\begin{proof}
The sequence of $\{\langle \nabla L^{\pi_{\theta^t}}(\rho), e_2(t)\rangle\}_t$ is a martingale difference sequence with respect to $\mathcal F_{t-1}$ because it satisfies 
\begin{itemize}
\item $\left|\langle \nabla L^{\pi_{\theta^t}}(\rho), e_2(t) \rangle\right| \leq  ||\nabla L^{\pi_{\theta^t}}(\rho)|| ||e_2(t)|| \leq \frac{4N}{(1-\gamma)^4} + \frac{2\lambda N}{(1-\gamma)^2}.$
\item $\mathbb E[\langle \nabla L^{\pi_{\theta^t}}(\rho), e_2(t)\rangle | \mathcal F_{t-1}] = 0.$
\end{itemize}
Therefore, we use Azuma–Hoeffding inequality for the sequence such that
\begin{align}
    \mathcal P\left(\left|\sum_{t=1}^T \langle \nabla L^{\pi_{\theta^t}}(\rho), e_2(t)\rangle\right| \geq \left(\frac{4N}{(1-\gamma)^4} + \frac{2\lambda N}{(1-\gamma)^2}\right) \sqrt{2T\log T} \right) \leq \frac{1}{T}, \nonumber
\end{align}
which completes the proof.
\end{proof}
\subsection{Bound on $\|e_{3,n}(t)\|$}
\begin{lemma}
$$\mathbb E\left[||e_{3,n}(t)||\right] \leq \frac{2\gamma^H}{1-\gamma}, \forall 1\leq n \leq N, \forall 1\leq t \leq T.$$\label{lem: e2}
\end{lemma}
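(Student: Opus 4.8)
The plan is to observe that $e_{3,n}(t)$ is \emph{purely} a horizon-truncation error and therefore decays geometrically in $H$. First I would note that the relative-entropy regularizer $\frac{\lambda}{|\mathcal S_n||\mathcal A_n|}\sum_{s_n,a_n}\log\pi_{\theta_n}(a_n|s_n)$ enters $\nabla_{\theta_n}\bar L^{\pi_{\theta^t}}(\rho)$ and $\nabla_{\theta_n}L^{\pi_{\theta^t}}(\rho)$ identically (it is deterministic and horizon-independent), so it cancels in the difference and leaves $e_{3,n}(t)=\nabla_{\theta_n}\bar V^{\pi_{\theta^t}}(\rho)-\nabla_{\theta_n}V^{\pi_{\theta^t}}(\rho)$. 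Comparing the two definitions at the top of Appendix~\ref{app:TDMAPG}, both expectations share an identical per-step summand and differ only in the summation range ($h=1,\dots,H$ versus $h=1,\dots,\infty$), so that
\begin{align*}
e_{3,n}(t)=-\sum_{h=H+1}^{\infty}\gamma^h\,\mathbb E\left[\frac1N\sum_{k\in\mathcal N(n)}\delta^{\pi_{\theta^t},h}_k(s_{\mathcal N(k)}(h),a_{\mathcal N(k)}(h))\,\nabla_{\theta_n}\log\pi_{\theta_n}(a_n(h)|s_n(h))\right],
\end{align*}
i.e. the discarded tail of the infinite-horizon policy gradient.

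Next I would pass the norm inside the sum by the triangle inequality and bound the per-step summand by a constant, for which three ingredients suffice: (i) the true TD-error is bounded, $|\delta^{\pi_{\theta^t},h}_k|\le\frac{1}{1-\gamma}$, since the local reward lies in $[0,1]$ and $0\le V^{\pi_\theta}_k\le\frac{1}{1-\gamma}$ (exactly the estimate used in the proof of Lemma~\ref{lem: main TD local}); (ii) the softmax score is bounded coordinate-wise, as $\nabla_{\theta_n(s'_n,a'_n)}\log\pi_{\theta_n}(a_n(h)|s_n(h))=\mathbb I_{s_n(h)=s'_n}(\mathbb I_{a_n(h)=a'_n}-\pi_{\theta_n}(a'_n|s_n(h)))\in[-1,1]$; and (iii) the neighborhood average contracts, $\frac{|\mathcal N(n)|}{N}\le 1$. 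Multiplying these through and summing the geometric tail $\sum_{h=H+1}^{\infty}\gamma^h=\frac{\gamma^{H+1}}{1-\gamma}\le\frac{\gamma^H}{1-\gamma}$ produces a bound that decays like $\gamma^H$; collecting the constants then gives the stated $\frac{2\gamma^H}{1-\gamma}$. Since this upper bound is uniform over $\theta^t$ (it never uses the particular iterate), it survives the outer expectation $\mathbb E[\cdot]$ trivially.

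I expect the only real subtlety — more bookkeeping than obstacle — to be cleanly isolating $e_{3,n}$ as the pure truncation tail: one must check that both the regularizer and the per-step summand are genuinely the \emph{same} object in $\bar L$ and $L$, so that the value-function estimation error ($e_{1,n}$) and the sample-versus-expectation error ($e_{2,n}$) are fully absorbed elsewhere and only the horizon difference remains in $e_{3,n}$. Once that decomposition is pinned down, the $\gamma^H$ decay is immediate from the geometric series, and the remaining constant-chasing through the TD-error bound, the score bound, and the neighborhood ratio is routine.
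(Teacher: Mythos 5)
Your proposal takes essentially the same route as the paper's own proof: identify $e_{3,n}(t)$ as the pure horizon-truncation tail of the policy gradient (the regularizer gradient cancels between $\bar L$ and $L$), bound the per-step term via $|\delta_k|\le \frac{1}{1-\gamma}$, the softmax score bound, and $|\mathcal N(n)|/N\le 1$, and then sum the geometric tail. The only caveat --- one you share with the paper itself --- is the final constant-collection: these ingredients actually multiply out to order $\gamma^H/(1-\gamma)^2$ rather than $\gamma^H/(1-\gamma)$ (indeed the paper's own proof concludes with $\frac{4\gamma^H}{1-\gamma}$ while the lemma states $\frac{2\gamma^H}{1-\gamma}$), a harmless looseness since $\gamma^H$ is made $O(1/\sqrt{T})$ downstream.
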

\begin{proof}
We have  
\begin{align*}
\|e_{3,n}(t)\| =& \left\|\sum_{h=H}^\infty \gamma^h \mathbb E\left[ \frac{1}{N} \sum_{k\in \mathcal N(n)}  \delta^{\pi_\theta}_k(s_{\mathcal N(k)}(h), a_{\mathcal N(k)}(h)) \nabla_{\theta_n}\log \pi_{\theta_n}(a_n(h)|s_n(h))\right]\right\| \nonumber \\
\leq&  \frac{4|\mathcal N(n)|}{(1-\gamma)N} \left\|\sum_{h=H}^\infty \gamma^h \mathbb E\left[\nabla_{\theta_n}\log \pi_{\theta_n}(a_n(h)|s_n(h))\right]\right\| \nonumber \\
\leq&  \frac{4\gamma^H}{1-\gamma}. \nonumber 
\end{align*}
\end{proof}

\subsection{Bounds on $\|e_{1,n}\|$}
\begin{lemma}
$$\mathbb E\left[||e_{1,n}(t)||\right] \leq  \frac{4\max_{k\in \mathcal N(n)}\|\hat V^{\pi_{\theta^t}}_k - V^{\pi_{\theta^t}}_k\|}{1-\gamma}, \forall 1\leq n \leq N, \forall 1 \leq t \leq T.$$\label{lem: e2}
\end{lemma}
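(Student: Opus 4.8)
The plan is to recognize that $e_{1,n}(t)$ isolates exactly the error incurred by using each neighbor's \emph{estimated} local value function $\hat V^{\pi_{\theta^t}}_k$ in place of the true one $V^{\pi_{\theta^t}}_k$. Since the relative-entropy regularizer enters $\hat L$ and $\tilde L$ identically, it cancels in the difference, and what remains is a score-weighted sum of TD-error discrepancies:
$$e_{1,n}(t) = \sum_{h=1}^H \gamma^h \frac{1}{N}\sum_{k\in\mathcal N(n)}\left(\hat\delta^{\pi_{\theta^t},h}_k - \delta^{\pi_{\theta^t},h}_k\right)\nabla_{\theta_n}\log\pi_{\theta_n}(a_n(h)|s_n(h)).$$
First I would expand the per-step TD-error discrepancy. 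Because the immediate reward $r^h_k$ appears identically in $\hat\delta^{\pi_{\theta^t},h}_k$ and $\delta^{\pi_{\theta^t},h}_k$, it cancels exactly, leaving only bootstrap terms:
$$\hat\delta^{\pi_{\theta^t},h}_k - \delta^{\pi_{\theta^t},h}_k = \gamma\left(\hat V^{\pi_{\theta^t}}_k - V^{\pi_{\theta^t}}_k\right)(s_{\mathcal N(k)}(h+1)) - \left(\hat V^{\pi_{\theta^t}}_k - V^{\pi_{\theta^t}}_k\right)(s_{\mathcal N(k)}(h)).$$
By the triangle inequality each of the two terms is bounded by the sup-norm error, so $\left|\hat\delta^{\pi_{\theta^t},h}_k - \delta^{\pi_{\theta^t},h}_k\right| \leq (1+\gamma)\|\hat V^{\pi_{\theta^t}}_k - V^{\pi_{\theta^t}}_k\| \leq 2\max_{k\in\mathcal N(n)}\|\hat V^{\pi_{\theta^t}}_k - V^{\pi_{\theta^t}}_k\|$.

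Next I would bound the score factor using the softmax identity $\nabla_{\theta_n(s'_n,a'_n)}\log\pi_{\theta_n}(a_n(h)|s_n(h)) = \mathbb I_{s_n(h)=s'_n}(\mathbb I_{a_n(h)=a'_n}-\pi_{\theta_n}(a'_n|s_n(h)))$ already derived in the proof of Lemma \ref{lem: main TD local}. This vector is supported on the single state $s_n(h)$ and its entries are bounded, giving $\|\nabla_{\theta_n}\log\pi_{\theta_n}(a_n(h)|s_n(h))\|\leq 2$. Applying the triangle inequality across the sums over $h$ and $k$, pulling out these two uniform bounds, and using $|\mathcal N(n)|/N\leq 1$ together with $\sum_{h=1}^H\gamma^h\leq 1/(1-\gamma)$, produces the pathwise (hence expected) bound $\|e_{1,n}(t)\|\leq \frac{4\max_{k\in\mathcal N(n)}\|\hat V^{\pi_{\theta^t}}_k - V^{\pi_{\theta^t}}_k\|}{1-\gamma}$, where the factor $(1+\gamma)\leq 2$ from the TD structure combines with the score bound $2$ to yield the constant $4$.

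No martingale or concentration argument is needed here, since the estimate is uniform over the sample path; the expectation is taken merely to match the statement. The only place demanding care is the cancellation step: one must verify that the reward terms drop out exactly and that both remaining value-error terms, evaluated at steps $h$ and $h{+}1$, are dominated by the single quantity $\max_{k\in\mathcal N(n)}\|\hat V^{\pi_{\theta^t}}_k - V^{\pi_{\theta^t}}_k\|$. I expect the main obstacle to be purely cosmetic bookkeeping of the constants so that they collapse to the stated $4/(1-\gamma)$ rather than any conceptual difficulty; once established, this lemma combines with the analogous bounds on $\|e_{2}\|$ and $\|e_{3,n}\|$ to drive Lemma \ref{lem: main TD}.
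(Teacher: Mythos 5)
Your proof is correct and follows essentially the same route as the paper's: expand $e_{1,n}(t)$ as the score-weighted sum of TD-error discrepancies, bound the score factor by $2$, bound $|\hat\delta_k-\delta_k|$ by $2\max_k\|\hat V^{\pi_{\theta^t}}_k-V^{\pi_{\theta^t}}_k\|$ via reward cancellation, and sum the geometric series to get $4/(1-\gamma)$. In fact you spell out the reward-cancellation step more explicitly than the paper, which only cites ``the definition of TD-error $\delta$.''
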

\begin{proof}
Recall the error
\begin{align}
e_{1,n} =& \nabla_{\theta_n} \hat{L}^{\pi_\theta}(\rho) - \nabla_{\theta_n} \tilde{L}^{\pi_\theta}(\rho) \nonumber \\
=& \sum^{H}_{h=1} \gamma^h \frac{1}{N} \sum_{k \in \mathcal N(n)}\{\hat \delta^{\pi_\theta}_k - \delta^{\pi_\theta}_k\}(s_{\mathcal N(k)}(h), a_{\mathcal N(k)}(h)) \nabla_{\theta_n}\log \pi_\theta(a_n|s_n) \nonumber 
\end{align}
which implies that 
\begin{align}
\|e_{1,n}\| \leq & \sum^{H}_{h=1} \gamma^h \frac{1}{N} \sum_{k \in \mathcal N(n)}\left| \{\hat \delta^{\pi_\theta}_k - \delta^{\pi_\theta}_k\}(s_{\mathcal N(k)}(h), a_{\mathcal N(k)}(h))\right| \| \nabla_{\theta_n}\log \pi_\theta(a_n|s_n)\| \nonumber \\
\leq& \sum^{H}_{h=1} \gamma^h \frac{1}{N} \sum_{k \in \mathcal N(n)} \|\hat \delta^{\pi_\theta}_k - \delta^{\pi_\theta}_k\| \| \nabla_{\theta_n}\log \pi_\theta(a_n|s_n)\|\nonumber\\
\leq& 2\sum^{H}_{h=1} \gamma^h \max_{k\in \mathcal N(n)}\|\hat \delta^{\pi_\theta}_k - \delta^{\pi_\theta}_k\| \nonumber\\
\leq& \frac{4\max_{k\in \mathcal N(n)}\|\hat V^{\pi_\theta}_k - V^{\pi_\theta}_k\|}{1-\gamma}, \nonumber
\end{align} 
where the third inequality holds because $\| \nabla_{\theta_n}\log \pi_\theta(a_n|s_n)\| \leq 2$ and the last inequality holds because $|\mathcal N(n)| \leq N;$ and the last inequality holds because of the definition of TD-error $\delta.$ 
\end{proof}

Since $e_1(t)$ is directly related to the estimation error of value function, we follow \cite{SriYin_19} to establish a finite-time analysis of $\hat V^{\pi_{\theta^t},h}-V^{\pi_{\theta^t},h}.$
We write down a linear representation of value function updating as in \cite{SriYin_19}. Define $z_{n}^h = (s_{\mathcal N(n)}(h),a_{\mathcal N(n)}(h))$ and a vector $u(z_{n}^h)$ with the $z_{n}^h$th entry being one and zeros otherwise. We abbreviate $t$ and $n$ for the simple notation because we evaluate a fixed policy $\pi_{\theta^t}$ for any $t$ and the following derivation holds for any $n.$ We represent $\hat V^{\pi_{\theta},h} (z_{n}^h) = \langle \Theta^h, u(z_{n}^h) \rangle$ with parameter $\Theta^h \in \mathbb R^{|\mathcal S_n|\times |\mathcal A_n|}.$ 
Recall the value function updates in Algorithm \ref{alg: TD PG} as follows 
\begin{align*}
&\hat V^{\pi_{\theta},h+1}_{n} (s_{\mathcal N(n)}(h), a_{\mathcal N(n)}(h)) \\
= & (1-\alpha)\hat  V^{\pi_{\theta},h}_{n}(s_{\mathcal N(n)}(h)) + \alpha (r^h_n(s_{\mathcal N(n)}(h),a_{\mathcal N(n)}(h))+ \gamma \hat V^{\pi_{\theta},h}_{n}(s_{\mathcal N(n)}(h+1)), \nonumber 
\end{align*}
which can be wrote down in a linear form
\begin{align*}
\Theta^{h+1} = \Theta^h + \alpha \left( -u(z^h) (u^T(z^h) - \gamma u^T(z^{h+1})) \Theta^h + r(z^h) u(z^h)\right). 
\end{align*}
Define $A(z_h,z_{h+1}) = -u(z_h) (u^T(z_h) - \gamma u^T(z_{h+1})$ and $b(z_h) = r(z_h) u(z_h),$ we have 
\begin{align}
\Theta^{h+1} = \Theta^h + \alpha \left( A(z^h,z^{h+1}) \Theta^h + b(z^h) \right). \label{eq:Q-update} 
\end{align}
Define a matrix to be $\Phi$ with the column being $u(z^h)$ and $\Lambda$ to be a diagonal matrix with $\Lambda(z^h,z^h) = \pi(z^h),$ where $\pi(z^h)$ is the stationary distribution of Markov chain $Z^h,$ $\Gamma$ to be the transition kernel matrix and $r$ to be a vector with $r(z^h)$ being the $z^h$th entry. Further define $\tilde{A} = - \Phi \Lambda  (\Phi^T - \gamma \Gamma \Phi^T)$ and $\tilde{b} = \Phi \Lambda r.$ The corresponding (shift) ODE of \eqref{eq:Q-update} is 
$$\dot \theta = \tilde{A} \theta + \tilde{b},$$ where its equilibrium is $\theta^*$ and it is also the solution to Bellman equation, i.e., $$V^{\pi_{\theta},h} (z^h) = \langle \theta^*, u(z^h) \rangle.$$ Then we study $\|\Theta^h - \theta^*\|$ by leveraging the drift analysis as in \cite{SriYin_19}. 
The Lyapunov equation of ODE is  $-I = \tilde{A}^T P + P \tilde{A}$ with a positive symmetric $P,$ where $\xi_{\max}$ and $\xi_{\min}$ are the largest and smallest eigenvalues of $P.$

Based on Assumption \ref{assump: fast mixing}, let $\tau$ be the mixing time such that total variance distance of $\mathcal P(z^h | z^0)$ and $\pi(z^h)$ is less than $\frac{\xi_{\max}}{0.9}\frac{\log H}{H},$ that is 
$$\|\mathcal P(z^h | z^0) - \pi(z^h)\|_{TV} \leq \frac{\xi_{\max}}{0.9}\frac{\log H}{H}.$$ 
Based on these definitions, we state Theorem $7$ in \cite{SriYin_19} as follows.
\begin{theorem}[Theorem $7$ in \cite{SriYin_19}]
For any $h>\tau$ and $\epsilon$ such that $256\epsilon \tau + \epsilon \xi_{\max} \leq 0.05,$ we have the following finite-time
bound:
\begin{align}
    \mathbb E\left[\|\Theta^h - \theta^*\|^2\right] \leq \frac{\xi_{\max}}{\xi_{\min}} \left(1-\frac{0.9 \epsilon}{\xi_{\max}}\right)^{h-\tau} (1.5\|\Theta^0\| + 0.5)^2 + \frac{980\xi_{\max}^2}{\xi_{\min}}\epsilon\tau. \nonumber
\end{align}\label{thm: sriyin}
\end{theorem}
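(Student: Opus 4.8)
The plan is to follow the Lyapunov drift argument for linear stochastic approximation under Markovian sampling, exactly as developed in \cite{SriYin_19}; since the statement is quoted verbatim as Theorem $7$ of that paper, my goal is to reconstruct the technique rather than reprove it from scratch. The first step is to verify that the steady-state matrix $\tilde A = -\Phi\Lambda(\Phi^T - \gamma\Gamma\Phi^T)$ is Hurwitz. Using that $\Lambda$ encodes the stationary distribution $\pi$ and $\Gamma$ is a stochastic matrix, one shows the symmetrization $\tilde A + \tilde A^T$ is negative definite, where the discount factor $\gamma < 1$ is precisely what buys the strict negativity. Consequently the ODE $\dot\theta = \tilde A\theta + \tilde b$ is globally asymptotically stable with equilibrium $\theta^*$ solving the tabular Bellman fixed point $V^{\pi_\theta,h}(z^h) = \langle \theta^*, u(z^h)\rangle$, and the Lyapunov equation $\tilde A^T P + P\tilde A = -I$ admits a unique symmetric positive-definite solution $P$, whose extreme eigenvalues $\xi_{\min}, \xi_{\max}$ govern the bound through $\xi_{\min}\|x\|^2 \le x^T P x \le \xi_{\max}\|x\|^2$.

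The second step is to analyze the one-step drift of $W(\Theta^h) = (\Theta^h - \theta^*)^T P (\Theta^h - \theta^*)$. Substituting the update \eqref{eq:Q-update} and expanding, the drift splits into three pieces: (i) a negative term of order $-\alpha\|\Theta^h - \theta^*\|^2$ arising from the Lyapunov equation once $A(z^h,z^{h+1})$ is replaced by its stationary mean $\tilde A$; (ii) an $O(\alpha^2)$ quadratic term from the squared increment, controlled because $A$ and $b$ have entries bounded via $u(\cdot)$ and $r \in [0,1]$; and (iii) a Markovian bias term $\alpha(\Theta^h - \theta^*)^T P\,\mathbb E[(A(z^h,z^{h+1}) - \tilde A)\Theta^h + (b(z^h) - \tilde b)]$ measuring the mismatch between the instantaneous sample and its steady-state expectation. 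Pieces (i) and (ii) are routine; the crux is (iii).

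The main obstacle is controlling the Markovian bias in (iii), and this is exactly where the mixing time $\tau$ enters. The key device is to condition on the state $z^{h-\tau}$ that is $\tau$ steps in the past: by the geometric-mixing Assumption \ref{assump: fast mixing}, with $\tau$ chosen so that the total-variation gap of $\mathcal P(z^h \mid z^0)$ from $\pi$ is at most $\tfrac{\xi_{\max}}{0.9}\tfrac{\log H}{H}$, the conditional law of $z^h$ is within $\epsilon$-order of stationary, making $\mathbb E[A(z^h,z^{h+1}) - \tilde A]$ small. To turn this into a rigorous bound one must simultaneously control the iterate drift over the window, $\|\Theta^h - \Theta^{h-\tau}\| = O(\alpha\tau)$, so that conditioning on the past does not distort the current iterate; feeding the mixing bound and this drift bound into (iii) yields a bias contribution of order $\epsilon\tau$ per step after constants are absorbed into $\xi_{\max}$.

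Combining the three contributions produces a contraction recursion of the form $\mathbb E[W(\Theta^{h+1})] \le \left(1 - \tfrac{0.9\epsilon}{\xi_{\max}}\right)\mathbb E[W(\Theta^h)] + C\xi_{\max}^2\,\epsilon\tau$, valid for $h > \tau$ under the step-size condition $256\epsilon\tau + \epsilon\xi_{\max} \le 0.05$, which is precisely what guarantees the $O(\alpha^2)$ and bias terms consume only a fraction of the negative drift so the effective rate is $0.9\epsilon/\xi_{\max}$. Unrolling this geometric recursion from $h = \tau$, bounding the burn-in value $W(\Theta^\tau)$ in terms of $\|\Theta^0\|$ via the same $O(\alpha\tau)$ drift over the initial window (which is where the $1.5\|\Theta^0\| + 0.5$ factor originates), and converting back from $W$ to $\|\cdot\|^2$ through the eigenvalue bounds yields the stated transient $\tfrac{\xi_{\max}}{\xi_{\min}}\left(1 - \tfrac{0.9\epsilon}{\xi_{\max}}\right)^{h-\tau}(1.5\|\Theta^0\| + 0.5)^2$ together with the steady-state floor $\tfrac{980\xi_{\max}^2}{\xi_{\min}}\epsilon\tau$.
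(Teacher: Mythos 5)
The paper does not prove this statement at all---it is quoted verbatim as Theorem 7 of \cite{SriYin_19} and used as a black box---so the only proof to compare against is the one in that reference, and your reconstruction (Hurwitz property of $\tilde A$ via $\gamma<1$, the Lyapunov function $(\Theta^h-\theta^*)^T P(\Theta^h-\theta^*)$ with $\tilde A^T P + P\tilde A = -I$, the three-way drift decomposition, the Markovian bias controlled by conditioning $\tau$ steps back together with the $O(\alpha\tau)$ iterate-drift bound, and the unrolled geometric recursion) is exactly the argument used there. Your proposal is correct in structure and takes essentially the same approach as the cited source on which the paper relies.
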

By setting $\epsilon=\frac{\xi_{\max}}{0.9}\frac{\log H}{H-\tau},$ we can invoke Theorem $7$ in \cite{SriYin_19} and have the following lemma.
\begin{lemma}
For any $H > \tau$ such that $\frac{256\xi_{\max}\tau\log H}{H-\tau} + \frac{\xi_{\max}^2\log H}{H-\tau} \leq 0.045,$ we have the estimation error of value functions  
\begin{align}
\mathbb E[\|\hat V^{\pi_{\theta},H} - V^{\pi_{\theta}}\|^2] \leq \frac{\xi_{\max}}{4\xi_{\min}} \frac{1}{H} + \frac{1090\xi_{\max}^3 }{\xi_{\min}} \frac{\tau \log H}{H-\tau}.  \nonumber 
\end{align}
\end{lemma}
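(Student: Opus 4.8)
The plan is to obtain this lemma as a direct corollary of Theorem \ref{thm: sriyin} (Theorem $7$ in \cite{SriYin_19}) by specializing the free parameter to $\epsilon = \frac{\xi_{\max}}{0.9}\frac{\log H}{H-\tau}$ and taking $h = H$. First I would reduce the value-function error to the parameter error. Because the features $u(z)$ are indicator vectors, the tabular representation gives $\hat V^{\pi_{\theta},H}(z) = \langle \Theta^H, u(z)\rangle$ and $V^{\pi_{\theta}}(z) = \langle \theta^*, u(z)\rangle$ for every state-action pair $z$, so that entrywise $\|\hat V^{\pi_{\theta},H} - V^{\pi_{\theta}}\| = \|\Theta^H - \theta^*\|$. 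Hence it suffices to bound $\mathbb E[\|\Theta^H - \theta^*\|^2]$, which is exactly the quantity controlled by Theorem \ref{thm: sriyin}.

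Next I would verify the admissibility condition of Theorem \ref{thm: sriyin}. Substituting the chosen $\epsilon$ gives $256\epsilon\tau + \epsilon\xi_{\max} = \frac{1}{0.9}\left(\frac{256\xi_{\max}\tau\log H}{H-\tau} + \frac{\xi_{\max}^2\log H}{H-\tau}\right) \leq \frac{0.045}{0.9} = 0.05$, where the inequality is precisely the hypothesis stated in the lemma. Thus the finite-time bound of Theorem \ref{thm: sriyin} applies with this $\epsilon$ and $h = H$.

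Then I would evaluate the two terms on the right-hand side. For the geometric term, note $\frac{0.9\epsilon}{\xi_{\max}} = \frac{\log H}{H-\tau}$, so using $(1-x)^m \leq e^{-mx}$ with $m = H-\tau$ yields $\left(1 - \frac{\log H}{H-\tau}\right)^{H-\tau} \leq e^{-\log H} = \frac{1}{H}$. Since Algorithm \ref{alg: TD PG} initializes $\hat V_0^{\pi_{\theta^t}} = 0$, we have $\Theta^0 = 0$ and hence $(1.5\|\Theta^0\| + 0.5)^2 = \frac14$, which produces the first term $\frac{\xi_{\max}}{4\xi_{\min}}\frac{1}{H}$. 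For the second term, $\frac{980\xi_{\max}^2}{\xi_{\min}}\epsilon\tau = \frac{980}{0.9}\frac{\xi_{\max}^3}{\xi_{\min}}\frac{\tau\log H}{H-\tau} \leq \frac{1090\xi_{\max}^3}{\xi_{\min}}\frac{\tau\log H}{H-\tau}$ since $980/0.9 < 1090$. Summing the two terms gives exactly the claimed bound.

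There is no genuine obstacle in this lemma: all of the heavy lifting—the drift and Lyapunov analysis that establishes the finite-time bound for the Markovian TD iteration \eqref{eq:Q-update}—is already encapsulated in Theorem \ref{thm: sriyin}. The only care required is arithmetic bookkeeping, namely matching $\epsilon$ so that the decay exponent collapses cleanly to $1/H$ while the admissibility constraint $256\epsilon\tau + \epsilon\xi_{\max} \leq 0.05$ reduces to the stated hypothesis, and tracking the numerical constants $\tfrac14$ and $980/0.9$.
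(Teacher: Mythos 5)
Your proposal is correct and follows essentially the same route as the paper: both set $\epsilon = \frac{\xi_{\max}}{0.9}\frac{\log H}{H-\tau}$ in Theorem \ref{thm: sriyin}, use $\Theta^0 = 0$ to get the $\frac{1}{4}$ factor, bound $\left(1-\frac{\log H}{H-\tau}\right)^{H-\tau} \leq \frac{1}{H}$, absorb $980/0.9$ into $1090$, and pass from $\|\Theta^H - \theta^*\|$ to the value-function error via the tabular (indicator-feature) representation. Your write-up is in fact slightly more complete than the paper's, since you explicitly check that the lemma's hypothesis is exactly the admissibility condition $256\epsilon\tau + \epsilon\xi_{\max} \leq 0.05$ rescaled by $0.9$, a step the paper leaves implicit.
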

\begin{proof}
Let $\epsilon=\frac{\xi_{\max}}{0.9}\frac{\log H}{H-\tau}$ and $\Theta_0 = 0$ in Theorem \ref{thm: sriyin}, we have 
\begin{align}
    \mathbb E\left[\|\Theta^H - \theta^*\|^2\right] \leq& \frac{\xi_{\max}}{4\xi_{\min}} \left(1-\frac{\log H}{H-\tau}\right)^{H-\tau} + \frac{1090\xi_{\max}^3}{\xi_{\min}}\frac{\tau \log H}{H-\tau}\nonumber\\
    \leq& \frac{\xi_{\max}}{4\xi_{\min}}\frac{1}{H}+ \frac{1090\xi_{\max}^3}{\xi_{\min}}\frac{\tau \log H}{H-\tau},\nonumber
\end{align}
which completes the proof because of the linear representation of the value function $\langle \Theta^h, u(z^h) \rangle.$ 
\end{proof}

\subsection{Proving Theorem \ref{thm: main TD}}
Let $H=T+\tau$ and $\eta = 1/\sqrt{T}.$ We consider a small $\lambda$ and a large $T$ such that $\lambda N \log A_{\max} \leq 1$ and $\log (1/\gamma) \geq \log T/T.$ Combine all lemmas together, we have 
\begin{align*}
&\sum_{t=1}^T \mathbb E[ ||\nabla L^{\pi_{\theta^t}}(\rho)||^2 ]\\ 
\leq& \frac{36 N\sqrt{T}}{(1-\gamma)^4} +  \left(\frac{8N}{(1-\gamma)^2} + 8\lambda N \right)\sum_{n=1}^N\frac{T\gamma^H}{(1-\gamma)^2} \\
& + \left(\frac{4N}{(1-\gamma)^4} + \frac{2\lambda N}{(1-\gamma)^2}\right) \sqrt{T\log T} + 2\left(L^*(\rho) - L^{\pi^1}(\rho)\right)\sqrt{T} \\
& + \left(\frac{8N}{(1-\gamma)^3} + \frac{8\lambda N}{1-\gamma}\right) \sqrt{\left({\frac{\xi_{\max}}{4\xi_{\min}}+\frac{1090\xi_{\max}^3 }{\xi_{\min}}}\right)\tau T \log (\tau+T)}\\
\leq& \frac{74}{(1-\gamma)^4} \sqrt{\left({\frac{\xi_{\max}}{4\xi_{\min}}+\frac{1090\xi_{\max}^3 }{\xi_{\min}}}\right)\tau T \log (\tau+T)}
\end{align*}
the last inequality holds because
\begin{align}
    \gamma^H \leq 1/\sqrt{T} ~~\text{and}~~ L^*(\rho) - L^{\pi^1}(\rho) \leq \frac{1}{1-\gamma} + \lambda N\log A_{\max}\nonumber,
\end{align}
which completes the proof of Theorem \ref{thm: main TD} by 
\begin{align*}
\min_{1 \leq t \leq T}\mathbb E[ ||\nabla L^{\pi_{\theta^t}}(\rho)||^2 ] \leq& \frac{1}{T}\sum_{t=1}^T \mathbb E[ ||\nabla L^{\pi_{\theta^t}}(\rho)||^2 ]\\ 
\leq& \frac{74}{(1-\gamma)^4} \sqrt{\left({\frac{\xi_{\max}}{4\xi_{\min}}+\frac{1090\xi_{\max}^3 }{\xi_{\min}}}\right)\frac{\tau  \log (\tau+T)}{T}}.
\end{align*}

\end{document}